\documentclass[reqno]{amsart}
\usepackage{amsaddr}
\usepackage[T1]{fontenc}
\usepackage[dvipsnames]{xcolor}
\usepackage[left=24mm,right=24mm,top=35mm]{geometry}
\usepackage{amsmath,amssymb,amsthm,mathtools}
\usepackage[shortlabels]{enumitem}
\usepackage[numbers,sort&compress]{natbib}
\usepackage{dsfont}
\usepackage{bm,graphicx,booktabs}
\usepackage{thmtools,thm-restate}
\usepackage{physics}
\usepackage[expansion=false]{microtype}
\usepackage{tikz}
\usetikzlibrary{arrows.meta}

\newcommand{\inner}[2]{\langle #1,#2\rangle}

\definecolor{sfpInk}{rgb}{.10,.19,.22}
\definecolor{sfpLow}{rgb}{.97,.96,.91}
\definecolor{sfpHigh}{rgb}{.09,.37,.42}
\pgfdeclarefunctionalshading{sfp-island}{\pgfpoint{0bp}{0bp}}{\pgfpoint{100bp}{100bp}}{}{
  25 sub 50 div exch 25 sub 50 div exch
  1 1 2.718281828 3 index 0.22 0.07 360 8 index 0.64 sub mul cos mul add 0.05 720 8 index 0.64 sub mul cos mul add sub dup 0 lt { pop 0 } if 0.075 div dup mul neg exp sub 1 0.92 2.718281828 6 index 0.5631606651 sub 0.17 div dup mul 6 index 0.79 sub 0.13 div dup mul add neg exp mul sub mul sub
  3 1 roll pop pop
  dup .88 mul neg .97 add exch
  dup .59 mul neg .96 add exch
  .49 mul neg .91 add
}
\pgfdeclarefunctionalshading{sfp-connected}{\pgfpoint{0bp}{0bp}}{\pgfpoint{100bp}{100bp}}{}{
  25 sub 50 div exch 25 sub 50 div exch
  1 1 2.718281828 3 index 0.22 0.07 360 8 index 0.64 sub mul cos mul add 0.05 720 8 index 0.64 sub mul cos mul add sub dup 0 lt { pop 0 } if 0.075 div dup mul neg exp sub 1 0.75 2.718281828 6 index 0.34 0.3 8 index 0.1579179607 sub 0.6320820393 div mul add 0.15 180 9 index 0.1579179607 sub 0.6320820393 div mul sin mul add sub 0.055 div dup mul 6 index 0.79 sub dup 0 lt { pop 0 } if 0.1 div dup mul add 0.1579179607 7 index sub dup 0 lt { pop 0 } if 0.1 div dup mul add neg exp mul sub mul sub
  3 1 roll pop pop
  dup .88 mul neg .97 add exch
  dup .59 mul neg .96 add exch
  .49 mul neg .91 add
}

\definecolor{C1}{RGB}{52,89,149}
\definecolor{C15}{RGB}{82,129,209}
\definecolor{C2}{RGB}{251,77,61}
\definecolor{C3}{RGB}{3,206,164}
\definecolor{C5}{RGB}{180,77,155}
\usepackage{hyperref}
\hypersetup{colorlinks=true,linkcolor=C5,citecolor=C5,urlcolor=C5}
\newtheorem{thm}{Theorem}

\theoremstyle{definition}
\newtheorem{defn}{Definition}
\theoremstyle{remark}

\newcommand{\mch}{\mathcal H}
\newcommand{\mcg}{\mathcal G}
\newcommand{\mcs}{\mathcal S}
\newcommand{\mcm}{\mathcal{M}}
\newcommand{\mco}{\mathcal O}
\newcommand{\mbc}{\mathbb C}

\newcommand{\id}{\mathds 1}
\newcommand{\mg}{\mathsf{MG}(n)}
\newcommand{\ad}{^{\dagger}}
\newcommand{\Span}{\operatorname{span}}
\newcommand{\eps}{\varepsilon}

\renewcommand{\leq}{\leqslant}
\renewcommand{\geq}{\geqslant}
\def\tn{^{\otimes n}}
\def\ts{^{\otimes 2}}
\def\Om{\Omega}
\def\Lm{\Lambda}
\def\ot{\otimes}

\title[Fermionic Gaussianity can be tested with mode-independent sample-complexity]{Fermionic Gaussianity can be tested with\\ mode-independent sample-complexity}

\author{\vspace{-5mm} Maxwell  West\,\textsuperscript{1,2}  and  Mart\'{i}n Larocca\,\textsuperscript{1,2}}
\address{\textsuperscript{1}Theoretical Division, Los Alamos National Laboratory, Los Alamos, New Mexico 87545, USA}
\address{\textsuperscript{2}Quantum Science Center, Oak Ridge, TN 37931, USA}

\date{}

\begin{document}
\begin{abstract}
Deciding whether an unknown quantum state either belongs to, or is far from, a given family of states is a natural question of quantum information theory. 
In many such cases, there is a natural 2-copy test which always accepts whenever the state indeed belongs to the target family; it is often much more difficult, however, to bound the probability of   acceptance for states that are far from the family.  
Here we develop a simple new technique for upper bounding this probability for general families, and therefore upper bounding the  sample-complexity of the decision problem itself.  As a particularly striking example of this framework, we show that deciding whether an unknown pure $n$-mode state is  fermionic Gaussian, or at least $\eps$-far in trace distance from all Gaussian   states, can be accomplished using the optimal number  $\Theta(\eps^{-2})$  of copies.  Remarkably, our analysis  applies to the well-known Bell sampling procedure for testing  fermionic Gaussianity, which we   therefore show to be optimal,   even compared to protocols making arbitrary collective and adaptive measurements. As a second example, we show  that the analogous decision problem for the family of Slater determinant states can also be solved with mode-independent sample-complexity.
\end{abstract}
\maketitle
\tableofcontents

\section{Introduction}
Quantum property testing~\cite{montanaro2013survey}, the testing of properties of quantum states~\cite{harrow2013testing,gross2021schur,soleimanifar2022testing,bittel2026complete}, unitary transformations~\cite{low2009learning,wang2011property}, and Hamiltonians~\cite{laborde2022quantum,bluhm2026hamiltonian}, is a major subfield within quantum information theory. As a prototypical example, one might have in mind a family of states and wonder, given access to copies of an unknown state, how efficiently   membership (or otherwise) of that family might be decided. In that spirit, of particular interest to us will be (pure)   fermionic Gaussian states, the states obtained by evolving a computational basis state under Hamiltonians quadratic in fermionic creation and annihilation operators, occupy a particularly useful position in quantum information
theory~\cite{valiant2001quantum,terhal2002classical,knill2001fermionic,bravyi2004lagrangian,jozsa2008matchgates}.
For example, their classical tractability has made them a basic reference point for understanding which additions to an otherwise classically simulable computation can supply a quantum advantage~\cite{jozsa2008matchgates,bravyi2006universal,de2013power}; more recently, the manifold of fermionic Gaussian states has played a key role in various  protocols for tomography~\cite{zhao2021fermionic,wan2022matchgate,mele2024efficient,chen2026optimal,west2026particle}, and for the benchmarking of quantum computers~\cite{helsen2022matchgate}. Given all of this interest, a particularly natural question in quantum property testing, then, is how difficult it is to decide  whether a given state is Gaussian or not.\\

Concretely, suppose that we are given access to independent copies of a fixed unknown pure state $\ket\psi$ on $n$ fermionic modes, and  are promised that $\ket\psi$ is either Gaussian or is at trace distance at least $\eps$ from every pure Gaussian state, and would like to decide which is true. How many copies do we need?
There has   been substantial progress towards answering this question in recent years~\cite{mele2024efficient,bittel2025optimal,lyu2024fermionic,walter2025random,tarabunga2026fermionic,haug2026practical}, with the best known sample-complexities falling steadily from  $\mco(n^5/\varepsilon^4)$~\cite{bittel2025optimal,mele2024efficient} to  $\mco(n^2/\varepsilon^2)$~\cite{tarabunga2026fermionic,walter2025random} to, very recently, $\mco(n/\varepsilon^2)$~\cite{haug2026practical}. The   algorithms  of several of these works~\cite{tarabunga2026fermionic,haug2026practical} rely on a remarkably simple \textit{Bell sampling} based test. Indeed,  with $c_1,\ldots,c_{2n}$ the Majoranas, one can show that (the tensor square of) every pure
Gaussian state is in the kernel of $\Lambda_n=\sum_{a=1}^{2n}c_a\otimes c_a$. As the operators $c_a\otimes c_a$ can be simultaneously measured (in the Bell basis) this readily leads   to a test which accepts for every Gaussian state: measure   their eigenvalues, and accept when they sum to zero. This does not immediately settle the testing problem, however, as one must also show that a state far from the Gaussian family has a noticeable probability of violating this criterion. To that end, Refs.~\cite{tarabunga2026fermionic,haug2026practical} have shown that this rejection probability  is large enough that their above-quoted sample-complexities suffice to decide the problem. In this work, we show that this $n$-dependence can be removed entirely:  the rejection probability $r_n(\psi)$ of the Bell test in fact satisfies $r_n(\psi)\geq \eps^2/8$, from which it follows that the sample-complexity is $\mco(\varepsilon^{-2})$. \\

Our argument, which allows us to analyse property testing algorithms in some generality, has a reasonably simple general form. We   consider   situations where we have some target family of states in mind, and a two-element projective POVM  $\{P,Q\}$ on two copies of the Hilbert space, such that $P$ accepts with certainty if given (the tensor square of) a state in the family, and would like to know about the rejection probability for states which are far from the family. Essentially, we derive some criteria under which we can prove that this rejection probability is bounded away from zero, allowing us to conclude that the corresponding property testing task can be solved with constant sample-complexity. As two examples of the framework, we  then consider the cases of fermionic Gaussianity testing, and the testing of whether a state is a Slater determinant. 
We begin by describing this setup in some generality, and then in subsequent sections specialise to  our examples. 

\section{The general case}\label{sec:gen}
We will assume that our target family forms a  compact  complex  submanifold $\mcm$ of the full manifold of states (for example, the orbits of highest-weight states of group irreps have such a complex structure~\cite{cavalli1986compact}). At a point   $\ket g$ on the target manifold, we denote its (projective) tangent space  by $T_g\mcm$. 
We assume that we have a two-element projective POVM $\{P, Q\}$ acting on two copies of the Hilbert space $\mch$ which satisfies \textit{perfect completeness}; that is, if $\ket g\in\mcm$, then the accepting probability $f(g):={\rm tr}\,[P (\ketbra{g})\ts]$ is equal to one. Our main preoccupation in this section will be to derive conditions under which we can say something non-trivial about the rejecting probability $r(\psi):={\rm tr}\,[Q (\ketbra{\psi})\ts]$ for states $\ket\psi$ which are far (in trace distance) from every element of $\mcm$.\\

To begin, we note that by the assumed   compactness of $\mcm$, there is a (possibly non-unique) state $\ket g\in\mcm$ which attains the  maximal fidelity $F_\mcm(\psi) = \max_{g'\in \mcm} \lvert\braket{g'}{\psi}\rvert^2$. Our first observation is that $\ket\psi$ is orthogonal to every vector in the tangent plane  $T_g\mcm$; that is, (up to possibly fixing some phases) we have 
\begin{equation}\label{eq:stationarity}
    \ket \psi=\sqrt {F_\mcm(\psi)} \ket g+\sqrt{1-F_\mcm(\psi)} \ket \chi,
\end{equation}
where $\ket\chi\in (\mathbb C\ket g\oplus T_g\mcm)^\perp$ (see Figure~\ref{fig:1}). To see this, take any vector $\ket u\in T_g\mcm$. We can then (by definition) find a curve $\gamma:[-\epsilon,\epsilon]\to \mcm$ with $\gamma(0)=\ket g,\ \dot{\gamma}(0)=\ket u$ for some $\epsilon>0$. But, by the assumed optimality of  $\ket g$, we then have
\begin{equation*}
    0 = \frac{d}{dt}\big[ \lvert \braket{\gamma(t)}{\psi}\rvert^2 \big] \Big\rvert_{t=0}= \frac{d}{dt}\big[  \braket{\gamma(t)}{\psi}\big] \braket{\psi} {\gamma(t)}  \Big\rvert_{t=0} +   \braket{\gamma(t)}{\psi} \frac{d}{dt}\big[\braket{\psi} {\gamma(t)} \big] \Big\rvert_{t=0} = 2\sqrt {F_\mcm(\psi)}\, \mathfrak{Re}  \braket{\psi} {u} ,
\end{equation*}
from which we see $\mathfrak{Re}  \braket{\psi} {u} =0$. Similarly, by the complexity of the manifold, the same argument applied to $iu$ allows us to conclude $\braket{\psi} {u} =0$.  In our examples, we will be interested in obtaining (via changes of variables) a decomposition of the form Eq.~\eqref{eq:stationarity} for which the state $\ket g$ has particularly nice properties. For example, in the fermionic Gaussian case we will see that we can without loss of generality take $\ket g$ to be the vacuum state, which will greatly facilitate the resulting analysis. \\

Now, we will always take  our POVM elements to act on two copies of the Hilbert space, so let us consider the tensor square of Eq.~\eqref{eq:stationarity}:
\begin{figure}
    \centering

\begin{tikzpicture}[
  x=4.25cm,y=3.4cm,font=\small,>=Stealth,
  sfp target/.style={draw=sfpInk,line width=.9pt},
  sfp point/.style={circle,fill=sfpInk,draw=white,line width=.3pt,
                    inner sep=0pt,minimum size=4.1pt}]
\def\sfpCurve{plot[domain=0:1,samples=300]
  (\x,{.22+.07*cos(360*(\x-.64))+.05*cos(720*(\x-.64))})}

% (a) A tangent line represents the tangent space in this schematic.
\begin{scope}
  % \node[anchor=west] at (0,1.085) {(a)\enspace Geometry};
  \draw[sfp target] \sfpCurve;
  \node[anchor=south west,text=sfpInk] at (.055,.24) {$\mathcal M$};
  \draw[gray,densely dashed,line width=.55pt] (.31,.34)--(.94,.34);
  \node[anchor=south east,text=gray] at (.96,.355) {$T_g\mathcal M$};
  \draw[sfpInk,line width=.65pt] (.64,.34)--(.64,.79);
  \draw[sfpInk,line width=.5pt] (.64,.389)--(.679,.389)--(.679,.34);
  \node[sfp point] at (.64,.34) {};
  \node[anchor=north east,inner sep=3pt] at (.685,.325) {$\ket g$};
  \node[sfp point, fill=C15] at (.64,.79) {};
  \node[anchor=south west,inner sep=3pt] at (.15,.82) {\textcolor{C1}{$\ket\psi = \sqrt{F} \ket g + \sqrt{1-F}\ket \chi$}};
\end{scope}

% (b) The superlevel component through psi misses the target.
\begin{scope}[shift={(1.25,0)}]
  % \node[anchor=west] at (0,1.085) {(b)\enspace Ruled out};
  \shade[shading=sfp-island, rounded corners=7pt] (0,0) rectangle (1,1);
  \draw[sfp target] \sfpCurve;
  \node[sfp point] at (.64,.79) {};
  \node[anchor=south west,inner sep=3pt] at (.64,.79) {$\ket\psi$};
  \draw[rounded corners=7pt] (0,0) rectangle (1,1.);
  
\end{scope}

% (c) The coloured corridor indicates a path in the starting superlevel set.
\begin{scope}[shift={(2.4,0)}]
  % \node[anchor=west] at (0,1.085) {(c)\enspace A path exists};
  \shade[shading=sfp-connected, rounded corners=7pt] (0,0) rectangle (1,1);
  \draw[sfp target] \sfpCurve;
  \node[sfp point] at (.64,.79) {};
  \node[anchor=south west,inner sep=3pt] at (.64,.79) {$\ket\psi$};
  \node[anchor=west,inner sep=2pt] at (.735,.52) {$\gamma$};

  \draw[rounded corners=7pt] (0,0) rectangle (1,1.);
  
\end{scope}

% One shared scale; no axes or extra annotations in the heat maps.
\shade[bottom color=sfpLow,top color=sfpHigh]
  (3.565,.15) rectangle (3.603,.85);
\node[anchor=west,inner sep=3pt] at (3.603,.15) {$0$};
\node[anchor=west,inner sep=3pt] at (3.603,.85) {$1$};
\node at (3.585,.965) {$f$};

\end{tikzpicture}
\setlength{\captionindent}{5pt}
    \caption{In general, we consider the sample-complexity of the task of deciding whether a state $\ket\psi$ belongs to a submanifold $\mcm$ of states, or is at least $\varepsilon$-far in trace distance from all states in $\mcm$. It is helpful to decompose it into a nearest state $\ket g\in\mcm$, along with a component orthogonal to the tangent plane of $\mcm$ at $\ket g$ (left panel). We will work with property testers whose acceptance probability $f(\ket\psi)$ is always perfect for states within $\mcm$, and are interested in deriving conditions under which we can prove that they do not also erroneously accept states far from $\mcm$ (middle panel). To that end, we investigate settings under which states can be proved to be deformable to the target manifold along paths whose acceptance probability does not fall below that of the initial state  (right panel). Under such a condition (which we show holds for both fermionic Gaussianity testing and Slater determinant testing), we are able to show that states far from $\mcm$ are accepted with probability bounded away from one, which allows us to establish constant sample-complexity scaling. }
    \label{fig:1}
\end{figure}
\begin{equation}\label{eq:squared}
\ket \psi^{\otimes2} =F_{\mcm}(\psi)\ket g^{\otimes2}+\sqrt{F_{\mcm}(\psi)(1-F_{\mcm}(\psi))}(\ket g \ket \chi+\ket \chi  \ket g) +(1-F_{\mcm}(\psi))\ket \chi^{\otimes2}.
\end{equation}
The first term of Eq.~\eqref{eq:squared} is (by assumption) always accepted by $P$, so that controlling the rejection probability for states beyond the target family is entirely a question of the final two terms. Roughly speaking, our hope is that, for states not in the target family, the middle and final terms do not ``cancel too much'', which would lead to that state incorrectly passing the test. One way to get good control over this comes from the hypothesis of the following lemma:
\begin{restatable}{lem}{lemwitness} \label{lem:witness}
Suppose that for every target $\ket g\in\mcm $ and   $\ket \chi\in ( \ket g\oplus T_g)^\perp$ there is a  
$\ket w \in\operatorname{ran}Q$ such that $\langle w,\chi^{\otimes2}\rangle=0$, $|\langle w,g\otimes\chi+\chi\otimes g\rangle|\ge a$, and $\| w\|^2\le B$, where $a$ and $B$ do not depend on $\ket g$ or $\ket\chi$. Then, with $c=a^2/B$,  we have $r(\psi)\ge cF_\mcm(\psi)(1-F_\mcm(\psi))$.
\end{restatable}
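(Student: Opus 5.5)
The plan is to lower-bound the rejection probability by projecting onto the single witness direction supplied by the hypothesis. Since $Q$ is a projector and $\ket w\in\operatorname{ran}Q$, we have $Q\geq \ketbra{w}/\|w\|^2$ as operators. Indeed, $Q$ acts as the identity on $\operatorname{ran}Q$, so $Q - \ketbra{w}/\|w\|^2$ is $Q$ minus a rank-one projector onto a vector inside $\operatorname{ran} Q$, which is positive semidefinite. Hence
\[
r(\psi)=\bra{\psi}^{\otimes2}Q\ket{\psi}^{\otimes2}\;\geq\;\frac{|\langle w,\psi^{\otimes2}\rangle|^2}{\|w\|^2}\;\geq\;\frac{|\langle w,\psi^{\otimes2}\rangle|^2}{B}.
\]

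Next, I will fix $\ket g\in\mcm$ attaining $F_\mcm(\psi)$ and write $\ket\psi$ in the form of Eq.~\eqref{eq:stationarity}, with $\ket\chi\in(\mathbb C\ket g\oplus T_g\mcm)^\perp$. This decomposition exists by the stationarity argument above, using compactness and the complex structure of $\mcm$. If $F_\mcm(\psi)\in\{0,1\}$ the claimed bound is trivial, so assume otherwise; then $\ket\chi$ is a genuine unit vector in the allowed subspace. The hypothesis then supplies a $\ket w$ for this particular pair $(g,\chi)$, and I expand $\langle w,\psi^{\otimes2}\rangle$ using Eq.~\eqref{eq:squared}.

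The three terms are handled as follows:
\begin{itemize}
\item The $\chi^{\otimes2}$ term vanishes by hypothesis.
\item The $g^{\otimes2}$ term vanishes because $f(g)=1$ means $\operatorname{tr}[Q\,\ketbra{g}^{\otimes2}]=0$, so $Q\ket g^{\otimes2}=0$. Since $\ket w=Q\ket w$, this gives $\langle w,g^{\otimes2}\rangle=\langle w,Qg^{\otimes2}\rangle=0$.
\item What remains is $\sqrt{F(1-F)}\,\langle w,g\otimes\chi+\chi\otimes g\rangle$, whose modulus is at least $a\sqrt{F(1-F)}$.
\end{itemize}
Squaring and dividing by $B$ gives $r(\psi)\geq (a^2/B)\,F_\mcm(\psi)(1-F_\mcm(\psi))$, as claimed.

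The only delicate point is bookkeeping rather than analysis. First, the phases absorbed in Eq.~\eqref{eq:stationarity} do not matter, since they only multiply $\ket\psi^{\otimes2}$ by a global phase. Second, one must check that perfect completeness really forces $\ket w\perp \ket g^{\otimes2}$, which holds because $Q$ is a projector rather than a general POVM element. I expect no genuine obstacle beyond these two checks.
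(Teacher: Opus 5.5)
Your proposal is correct and follows the paper's proof essentially verbatim. You bound $Q$ below by $\ketbra{w}{w}/\|w\|^2$, then expand $\ket\psi^{\otimes2}$ via Eq.~\eqref{eq:squared} so that only the cross term survives, giving $|\langle w,\psi^{\otimes2}\rangle|\geq a\sqrt{F(1-F)}$; your explicit check that perfect completeness forces $Q\ket g^{\otimes2}=0$ (so $w\perp g^{\otimes2}$) and your treatment of the degenerate cases $F\in\{0,1\}$ just make precise steps the paper leaves implicit.
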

\begin{proof}
As $w$ is in the range of the projector $Q$, we have $r(\psi) = \bra{\psi}\ts Q \ket\psi\ts \ge \bra{\psi}\ts (\ketbra{w}{w}) \ket\psi\ts/\|w\|^2$. Meanwhile, 
by the various hypotheses of the lemma, the first and last terms on the right hand side of Eq.\eqref{eq:squared} are orthogonal to  $w$, so that   $|\langle w,\psi^{\otimes2}\rangle|\ge a\sqrt{F(1-F)}$, from which the result follows. 
\end{proof}
For example, in the Gaussian and Slater determinant sections we will find such ``witness vectors'' $\ket w$ with corresponding values of $c=1/2$ and $c=2/3$, respectively. Now, as we are interested in lower bounding the rejection probability for states outside of the target family, Lemma~\ref{lem:witness} is a good start. One unfortunate property of its bound, however, is that $cF(1-F)$ vanishes  at $F=0$. That is to say, for states with very low fidelity overlap with the target family, the bound of Lemma~\ref{lem:witness} is vacuous (which is unfortunate because those are exactly the states which we would like to be rejecting very strongly). The remainder of this section, essentially, is then about finding conditions under which we can strengthen our bound to also hold for states for which $F$ is very small. To that end, we have a definition:
\begin{defn}\label{def:path}
A family with test acceptance probability function $f$ has the \textit{path property} if every   $\ket\psi$ can be joined continuously to the family  by a path $\gamma:[0,1]\to \mch$ of unit vectors satisfying $\gamma(0)=\ket\psi,\  \gamma(1)\in\mathcal M,$ and $f(\gamma(t))\ge f(\psi) $ for $ 0\le t\le1$.
\end{defn}
We emphasise that Definition~\ref{def:path} allows the acceptance probability can rise and fall along the path, provided it never falls below its original value. It turns out that the path property combined with Lemma~\ref{lem:witness} gives us a very strong sample-complexity bound:
\begin{thm}\label{thm:1}
If $\mathcal M$ is compact and the path property and Lemma~\ref{lem:witness} hold, then $r(\psi)\ge  c(1-F_\mcm(\psi))/4$ for all pure states $\ket\psi$. This implies that, for $0<\varepsilon,\delta<1$, independently repeating the test $ \mco(  c^{-1}\varepsilon^{-2}\log 1/ \delta)$ times succeeds at   deciding whether a state is with $\mcm$ or at trace distance at least $\varepsilon$   with probability   at least $1-\delta$.
\end{thm}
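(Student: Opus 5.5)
We first get the bound on $r$ by combining Lemma~\ref{lem:witness} with the path property, splitting on the size of $F=F_\mcm(\psi)$. Then we convert it into a sample-complexity statement.

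\emph{Large fidelity.} If $F\geq 1/2$, then $F(1-F)\geq (1-F)/2$. Lemma~\ref{lem:witness} then gives $r(\psi)\ge c(1-F)/2\ge c(1-F)/4$ directly.

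\emph{Small fidelity.} If $F<1/2$, take the path $\gamma$ from Definition~\ref{def:path}, with $\gamma(0)=\ket\psi$ and $\gamma(1)\in\mcm$. The function $t\mapsto F_\mcm(\gamma(t))$ is continuous: $F_\mcm$ is a maximum over the compact $\mcm$ of functions that are jointly continuous, and $\gamma$ is continuous. It equals $F<1/2$ at $t=0$ and $1$ at $t=1$. By the intermediate value theorem there is a $t^\ast$ with $F_\mcm(\gamma(t^\ast))=1/2$. Applying Lemma~\ref{lem:witness} at the unit vector $\gamma(t^\ast)$ gives $r(\gamma(t^\ast))\geq c\cdot\tfrac12\cdot\tfrac12=c/4$. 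Since the POVM has two elements, $r=1-f$. The path property $f(\gamma(t^\ast))\ge f(\psi)$ therefore yields
\[
r(\psi)=1-f(\psi)\geq 1-f(\gamma(t^\ast))=r(\gamma(t^\ast))\ge c/4\ge c(1-F)/4.
\]
This case is the heart of the argument: it is exactly why the path property is needed, since Lemma~\ref{lem:witness} alone is vacuous as $F\to0$. The only subtle point is the continuity of $F_\mcm$ along the path, and compactness of $\mcm$ handles it.

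\emph{Sample complexity.} Write $d_{\rm tr}$ for the trace distance between pure states. If $\ket\psi$ is $\eps$-far from $\mcm$, then for every $g\in\mcm$ we have $d_{\rm tr}(\psi,g)=\sqrt{1-|\braket{g}{\psi}|^2}\ge\eps$. Hence $1-F\ge\eps^2$, and so $r(\psi)\ge c\eps^2/4$.

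If instead $\psi\in\mcm$, perfect completeness means every round accepts. The algorithm is therefore: run $N$ independent two-copy tests, and declare ``in $\mcm$'' iff all $N$ rounds accept. This is never wrong on members. For far states it errs with probability at most $(1-c\eps^2/4)^N\le e^{-Nc\eps^2/4}$, which is at most $\delta$ once $N=\lceil 4c^{-1}\eps^{-2}\ln(1/\delta)\rceil$. That gives the claimed $\mco(c^{-1}\eps^{-2}\log 1/\delta)$ repetitions, using $2N$ copies in total.
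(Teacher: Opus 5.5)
Your proposal is correct and follows essentially the same route as the paper. You split on whether $F_\mcm(\psi)\ge 1/2$, use Lemma~\ref{lem:witness} directly in the large-fidelity case, and in the small-fidelity case use continuity of $F_\mcm$ and the intermediate value theorem to find a point on the path with fidelity exactly $1/2$, where the bound $c/4$ transfers back to $\psi$ via $f(\gamma(t))\ge f(\psi)$; the sample-complexity conversion through $1-F_\mcm(\psi)\ge\eps^2$ and the $e^{-Nc\eps^2/4}$ bound also matches the paper.
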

\begin{proof}
Take some state $\ket\psi$. If  $F_\mcm(\psi)\ge1/2$, then Lemma~\ref{lem:witness} immediately implies the claimed   bound. If  $F_\mcm(\psi)<1/2$, let us   choose a path given by the path property. Now, the function $F_\mcm$ is continuous, since it is the maximum of a continuous function over the compact target. As it begins below $1/2$ and ends at one,  by the intermediate value theorem it equals $1/2$ at some point. At that point Lemma~\ref{lem:witness} shows that  the rejection probability is at least $c/4$;  by the path property, this is also at most the starting rejection. This proves that, either way, $r(\psi)\ge  c(1-F_\mcm(\psi))/4$.  Now, for $\varepsilon$-far-from $\mcm$ states we have $1-F_\mcm(\psi)\geq \varepsilon^2$, so that   every faraway state rejects a given round of the test with probability at least $c\varepsilon^2/4$. Its probability of passing $R$
independent rounds is then at most $\exp(-Rc\varepsilon^2/4)$, from the second claim of the theorem follows.
\end{proof}

Now, admittedly, it is not entirely clear that one would expect the path property of Definition~\ref{def:path} to be true particularly often in  examples. The way we will realise it in practice is through a sort of recursive procedure.  To see this, suppose we have a decomposition $\mch\cong \mch_0\oplus \mch_1$, and that our accepting POVM element $P$ commutes with (say) $T\ts$, where  $T=\id_{\mch_0}\oplus (i\id_{\mch_1})$. We can always write a given state  $\ket\psi\in\mch$ as $\ket\psi = \sqrt{t_0}\ket x+\sqrt{1-t_0}\ket y$ for some $\ket x\in \mch_0,\ \ket y\in \mch_1$, so that $\ket\psi\ts = t_0\ket x\ts + \sqrt{t_0(1-t_0)}(\ket x\ket y+\ket y\ket x) + (1-t_0)\ket y\ts$; more generally,   let us consider the parameterised family $\ket{\psi_t} = \sqrt{t}\ket x+\sqrt{1-t}\ket y$ for $0\leq t\leq 1$. The usefulness behind this setup is that, by the assumption of the commutativity  of $P$ with $T\ts$, most of the cross terms in the expression $f(\ket{\psi_t})={\rm tr}\,[P (\ketbra{\psi})\ts]$ for the acceptance probability are going to vanish, leading to
\begin{equation*}
    f(\ket{\psi_t}) = \bra{\psi_t}\ts P \ket{\psi_t} \ts = At^2 + Bt(1-t) + C(1-t)^2
\end{equation*}
for some $A,B,C$ independent of $t$. Now, here comes the mechanism responsible for getting the path property to work in practice: As the acceptance probability is a quadratic in $t$, if we could somehow guarantee that $\max\{f(\ket{\psi_0}),f(\ket{\psi_1})\}\geq f(\ket{\psi_{t_0}})$, then we could clearly deform $\ket{\psi_{t_0}}$ to one of those  ``endpoints'' without ever lowering its acceptance probability throughout the deformation. Having done that, we would have a state that lives completely in  either $\mch_0$ or $\mch_1$;   we   then seek to recursively apply such a procedure, ultimately ending up with a simple state that, if we set things up correctly, lives in $\mcm$. For example, in the fermionic Gaussianity case, the decomposition will be in terms of splitting the Hilbert space into the occupied / unoccupied sectors of a given mode. Inductively repeating this on the remaining $(n-1)$-mode system will ultimately lead us to a single computational basis state, which is fermionic Gaussian. \\

\noindent
All of this is perhaps made a little clearer through an explicit example, to which we now turn.

\section{Fermionic Gaussianity testing}\label{sec:fg}
As advertised, our primary application is to fermionic Gaussianity testing, which we will show can be accomplished with failure probability at most $\delta$ using the optimal number $\mco(\eps^{-2}\log(1/\delta))$   of samples. We begin by recalling some basic properties of the setup.
Let $\mch_n\cong(\mbc^2)^{\otimes n}$ be the Fock space of $n$ fermionic modes, with computational  basis states denoting which modes are occupied. We write $\ket0$ for the vacuum and make the standard Jordan--Wigner choice
\begin{equation}\label{eq:majorana}
 c_{2j-1}=Z_1\cdots Z_{j-1}X_j,\qquad
 c_{2j}=Z_1\cdots Z_{j-1}Y_j,\qquad 1\leq j\leq n.
\end{equation}
of Majoranas.   Equivalently, writing $a_j$ for the
annihilation operator of the $j$\textsuperscript{th} mode, we have $c_{2j-1}=a_j+a_j\ad$ and
$c_{2j}=-i(a_j-a_j\ad)$. The \textit{matchgate group}~\cite{jozsa2008matchgates} $\mg\cong\operatorname{Spin}(2n)$ is then generated by 2-Majorana rotations
\begin{equation}\label{eq:rotations}
 R_{ab}(\vartheta)=\exp\!\left(-\frac\vartheta2c_ac_b\right),
 \qquad a<b.
\end{equation}
Every $U\in\mg$ acts on the Majoranas through an orthogonal matrix, $Uc_aU\ad=\sum_b O_{ab}c_b$, with $O\in\operatorname{SO}(2n)$.
These unitaries preserve the fermionic parity operator
$\Pi=Z_1\cdots Z_n$, whose $\pm1$ eigenspaces are denoted
$\mch_{n,\pm}$. We define the even and odd parity pure Gaussian families by
\begin{equation}\label{eq:family}
 \mcg_{n,+}=\{U\ket0:U\in\mg\},\qquad
 \mcg_{n,-}=\{Uc_1\ket0:U\in\mg\},
\end{equation}
and put $\mcg_n=\mcg_{n,+}\cup\mcg_{n,-}$.  For a unit vector $\ket\psi$, we define its Gaussian fidelity and distance to the Gaussian manifold respectively by
\begin{align}\label{eq:distance}
 F_{\mcg}(\psi)&=\max_{g\in\mcg_n}\ \lvert\braket{g}{ \psi}\rvert^2,\\
 d_{\mcg}(\psi)&=\sqrt{1-F_{\mcg}(\psi)}.
\end{align}

In the testing fermionic Gaussianity   problem, the tester receives independent copies of   some unknown pure state $\ket\psi$, which is promised either to belong to $\mcg_n$ or to satisfy $d_{\mcg}(\psi)\geq\eps$.  Our analysis will focus on the Bell sampling procedure of Refs~\cite{tarabunga2026fermionic,haug2026practical}; namely, with   $\Lambda_n=\sum_a c_a\ot c_a$, we consider the POVM with elements given by $P_n={\rm proj}_{\ker \Lm_n}$, the projector onto the kernel of $\Lm_n$, and $Q_n=\id-P_n$. We will use the notation $ f_n(\psi)=\bra\psi^{\otimes2}P_n\ket\psi^{\otimes2}$, and $r_n(\psi)=1-f_n(\psi)$ for the acceptance and rejection probabilities respectively. One can readily verify by a direct calculation that $\Lm_n$ annihilates the (tensor square of the) vacuum, and furthermore that $(U\otimes U)\Lambda_n(U\ad\otimes U\ad)=\Lambda_n$ for all $U\in\mg$, from which it follows immediately that the Bell sampling test has perfect completeness, accepting $\ket\psi\ts$ with probability one whenever $\ket\psi$ is fermionic Gaussian. Although it is not particularly important for our purposes, we remark that the name ``Bell sampling'' comes from the fact that the basis in which the commuting operators $c_a\ot c_a$ are simultaneously diagonalised  is exactly a Bell basis, so that what one can do is simultaneously measure all of their eigenvalues, and then accept if and only if the sum of those eigenvalues is zero~\cite{tarabunga2026fermionic,haug2026practical}. Let us show that the sample-complexity of this procedure is optimal, which implies:

\begin{thm}\label{thm:main}
Pure $n$-mode Gaussian states can be distinguished from pure states satisfying $d_{\mcg}(\psi)\geq\eps$ with failure probability at most $\delta$  using the optimal number $\Theta(\varepsilon^{-2}\log(1/\delta))$ of copies, for all $0<\varepsilon,\delta<1/2$.
\end{thm}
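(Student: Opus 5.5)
We will deduce Theorem~\ref{thm:main} from Theorem~\ref{thm:1}. That theorem needs three ingredients. First, the Gaussian family is compact and complex: $\mcg_n$ is a finite union of orbits of the compact group $\mg$, namely orbits of the highest-weight states $\ket0$ and $c_1\ket0$. Second, we need a witness as in Lemma~\ref{lem:witness}. Third, we need the path property. The lower bound $\Omega(\eps^{-2}\log(1/\delta))$ is standard. One can distinguish a fixed Gaussian state $\ket g$ from $\sqrt{1-\eps'^2}\ket g+\eps'\ket\chi$, with $\ket\chi$ chosen so that the state is $\eps$-far. For instance, take $n=2$, $\ket g=\ket{00}$ and $\ket\chi=\ket{11}$ mixed suitably, or embed a non-Gaussian direction, and then apply the Helstrom bound for $N$ copies of two pure states of overlap $1-\Theta(\eps^2)$. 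So the work is in the upper bound.

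\emph{Witness.} By $\mg$-invariance of $\Lm_n$ and the transitivity of $\mg$ on each $\mcg_{n,\pm}$, we may take $\ket g=\ket0$. The odd case is handled by $c_1$, which maps $\Lm_n$ to $\pm\Lm_n$ under conjugation by $c_1\ot c_1$. The tangent space at the vacuum is spanned by $a_i\ad a_j\ad\ket0$. Hence $\ket\chi$ is supported on even Fock states with at least $4$ particles (plus possibly odd components, handled separately by parity), and $\chi$ has no $0$- or $2$-particle components. Compute $\Lm_n(\ket0\ket\chi+\ket\chi\ket0)$. Writing $c_a\ot c_a$ in terms of $a_j,a_j\ad$, the relevant combination is $\sum_j (a_j\ad\ot a_j+a_j\ot a_j\ad)$ up to factors and parity signs. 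When this acts on $\ket0\ket\chi$, it lowers the particle number of $\chi$ by one and raises that of the vacuum by one. I would take
\begin{equation*}
\ket w=\Lm_n(\ket0\ket\chi+\ket\chi\ket0)\in\operatorname{ran}\Lm_n=\operatorname{ran}Q_n ,
\end{equation*}
or its projection onto the sector with particle numbers $(1,k-1)$ and $(k-1,1)$. Then $\langle w,\chi\ts\rangle=0$ by particle-number counting, since $\chi\ts$ has both factors with at least $4$ particles. Next, $\langle w,g\chi+\chi g\rangle=\|\Lm_n(g\chi+\chi g)\|^2$, and $\|w\|^2$ is that same quantity. This gives $a^2/B=\|\Lm_n(g\chi+\chi g)\|^2$, which we must bound below by a constant. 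That amounts to computing $\sum_j\|a_j\chi\|^2=\langle\chi,N\chi\rangle\ge4$ (cross terms vanish) against $\|g\chi+\chi g\|^2=2$. This should give $c\ge1/2$, consistent with the advertised value.

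\emph{Path property.} This is the main obstacle. We follow the recursive scheme of Section~\ref{sec:gen}, splitting $\mch_n=\mch_0\oplus\mch_1$ by occupation of mode $n$. We must check that $P_n$ commutes with $T\ts$, where $T=\exp(i\pi a_n\ad a_n/2)$; this holds since $T$ is a matchgate (a rotation $R_{2n-1,2n}$ up to phase) and $\Lm_n$ is invariant. Then $f(\psi_t)=At^2+Bt(1-t)+C(1-t)^2$. The hard part is to show that $f(\psi_{t_0})\le\max\{f(\psi_0),f(\psi_1)\}$. Concavity in the wrong direction would break this, so we need $B\le \max(A,C)$ or an equivalent inequality. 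I would aim to show that the cross-term coefficient $B$ is at most $A+C$ minus something nonnegative, or more directly that $B\le\max(A,C)$. The plan is to identify $\mch_0,\mch_1$ with $(n-1)$-mode Fock spaces, decompose $\Lm_n=\Lm_{n-1}\ot$-type terms plus the mode-$n$ terms, and express $A=f_{n-1}(x)$, $C=f_{n-1}(y)$, and $B$ via kernel projections of mixed vectors $x\ot y$. The key fact to establish is that the mode-$n$ terms of $\Lm$ only reduce acceptance on the mixed sector.

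Once the endpoint state lies in $\mch_0$ or $\mch_1$, it is (up to the fixed mode-$n$ occupation, itself Gaussian) an $(n-1)$-mode state with the same acceptance. By induction it reaches a computational basis state, which lies in $\mcg_n$. Continuity of the concatenated path is clear, so Theorem~\ref{thm:1} gives $r_n(\psi)\ge \eps^2/8$ and hence $\mco(\eps^{-2}\log(1/\delta))$ copies suffice.
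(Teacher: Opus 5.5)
You have genuine gaps in both non-trivial ingredients: the witness vanishes identically, and the path property is left unproved.

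\textbf{The witness fails.} Every $c_a$ is parity-odd, so $c_a\ot c_a$ maps $\mch_{n,+}\ot\mch_{n,+}$ into $\mch_{n,-}\ot\mch_{n,-}$. Hence $w=\Lm_n e$ is orthogonal to the even--even vector $e=\ket{0}\ket{\chi}+\ket{\chi}\ket{0}$. Your identity $\langle w,e\rangle=\|\Lm_n e\|^2$ is false; the correct value is $\langle e,\Lm_n e\rangle=0$. So $a=0$ and Lemma~\ref{lem:witness} is vacuous. Your formula would also have given $c=\|\Lm_n e\|^2=8\langle\chi,N\chi\rangle\geq32$, which is absurd: it would force $r_n>1$ at $F_{\mcg}=1/2$.

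The paper's witness transfers \emph{two} particles. In each sector $\mch_K$ (total occupation one on the modes of $K$, zero elsewhere, $k=|K|\geq4$) it takes $w_K=\Lm_K^2e_K$ with $e_K=\ket{0}\ket{K}+\ket{K}\ket{0}$. The operator $\Lm_K^2=4k\id+8\sum_{i<j}\pm(X_{k_i}X_{k_j})\ts$ only reaches vectors $\ket{S}\ket{K\setminus S}$ with $|S|\in\{0,2,k-2,k\}$. Stationarity kills the $|S|=2,k-2$ amplitudes, so $\langle w_K,v_K\rangle=8k\sqrt p\,\psi_K$. The remaining work is the bit-flip count $\|w_K\|^2=\langle e_K,\Lm_K^4e_K\rangle$, which equals $2048$ for $k=4$ and $32(3k^2-2k)$ for $k>4$. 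Summing over the mutually orthogonal sectors gives the factor $1/2$. In your version the vanishing of two-particle amplitudes is never used, which is a symptom of the problem.

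\textbf{The path property is not established}, and it is the crux. Splitting along the fixed mode $n$ does give the quadratic $At^2+Bt(1-t)+C(1-t)^2$, and your check that $T$ is a matchgate is fine. But nothing forces either branch to satisfy $f_{n-1}\geq f_n(\psi)$, and your hoped-for bound on $B$ is not derived from anything.

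The paper never bounds $B$. It chooses the splitting mode adaptively, among the rotated modes attached to \emph{all} Majorana pairs $(a,b)$, via the identity $n^{-2}\sum_{a<b,s}\Pi_{ab,s}\ot\Pi_{ab,s}=\id-\Lm_n^2/(4n^2)$. This is combined with three facts:
\begin{enumerate}
\item $P_n\Lm_n=0$;
\item $P_n$ commutes with $\Pi_{ab,s}\ot\Pi_{ab,s}$, a spectral projector of the matchgate generator $ic_ac_b\ot\id+\id\ot ic_ac_b$;
\item $f_n(\ket{q_s}\ket{\phi})=f_{n-1}(\phi)$.
\end{enumerate}
Together these give $f_n(\psi)=n^{-2}\sum_{a<b,s}a_{ab,s}^2f_{n-1}(\phi_{ab,s})$ with $n^{-2}\sum_{a<b,s}a_{ab,s}^2\leq1$. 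So some branch does not lower the acceptance (Lemma~\ref{lem:branching}). It then suffices that one endpoint of the quadratic is at least $f(\psi_{t_0})$: such a quadratic can always be followed to some endpoint without dipping below its starting value.

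Averaging only over the physical pairs $(2j-1,2j)$ gives $\tfrac n2\id+\tfrac12\sum_jZ_j\ot Z_j$, which does not act as a constant on $\ker\Lm_n$. The averaging over all $\binom{2n}{2}$ pairs is therefore the missing idea.

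\textbf{Smaller points.} Theorem~\ref{thm:1} needs its hypotheses for all states, but your witness only treats even $\chi$. The paper handles mixed parity separately, via $f_n(\psi)=p^2f_n(\psi_+)+q^2f_n(\psi_-)$ and $F_{\mcg}(\psi)=\max\{pF_+,qF_-\}$. Your lower-bound instance $\sqrt{1-\eps'^2}\ket{00}+\eps'\ket{11}$ is Gaussian, as is every definite-parity state on at most three modes. Use instead e.g.\ $\sqrt{1-\eps^2}\ket{0}+\eps\,c_1\ket{0}$, whose Gaussian distance is exactly $\eps$, or the paper's $\sqrt{1-\eps^2}\ket{0}\tn+\eps\ket{1}\tn$.
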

\begin{proof}
We will show that repeating the Bell test   $R=\left\lceil 8\eps^{-2}\log 1/\delta\right\rceil$ times, and accepting if and only if every round accepts, distinguishes pure $n$-mode Gaussian states from pure states satisfying
$d_{\mcg}(\psi)\geq\eps$, with perfect completeness (i.e., all Gaussian states are accepted) and soundness error
at most $\delta$ (i.e., all $\varepsilon$-far-from-Gaussian states are accepted with probability at most $\delta$). 
At a high level, the proof has three steps, which   follow along the general construction from Section~\ref{sec:gen}:

\begin{enumerate}[label=(\arabic*),leftmargin=*]
 \item  By considering the nearest Gaussian state to $\ket\psi$ (which we will see  without loss of generality may be taken to be the vacuum),  we show that $r_n(\psi)\geq F_{\mcg}(\psi)(1-F_{\mcg}(\psi))/2$ (cf. Lemma~\ref{lem:witness}).
 \item Next we show that the initial state can be continuously deformed to a Gaussian state, such that at all points along the path the Bell test acceptance never falls below the initial value $f_n(\psi)$ (cf. Definition~\ref{def:path}).
 \item Finally, we    appeal to  Theorem~\ref{thm:1}.
\end{enumerate}

We emphasise that these  are only steps of the analysis, and the actual experimental procedure is just the usual Bell test (and so involves neither finding a closest Gaussian state nor finding a suitable deformation path); we are just showing that the sample-complexity of this existing procedure is better than was previously known.\\

\textit{Step 1.}
For simplicity, let us begin by assuming that  $\ket{\psi'}$ has definite parity, which we without loss of generality take to be even (if it is odd we instead apply our results to  $c_1\ket{\psi'}$; as $c_1\ts \Lm_n (c_1\ad)\ts=\Lm_n$, all of our results go through the same) and choose a Gaussian $\ket g$ which maximises the  overlap with $\ket{\psi'}$. Our first observation is that,  by applying to $\ket{\psi'}$ the inverse of a Gaussian unitary $U$ which prepares $\ket g$, we may without loss of generality take the closest Gaussian state to be $\ket0$; indeed, deciding whether $\ket\psi:=U\ad\ket{\psi'}$ is Gaussian or far-from-Gaussian is clearly equivalent to deciding the same question with respect to $\ket{\psi'}$ itself. So, we can write $\ket\psi = \alpha\ket 0 + \sqrt{1-\alpha^2}\ket\chi$ for some $\ket\chi$ which is orthogonal to the tangent plane $T_{\ket 0}\mcm$ of the manifold $\mcm$ of Gaussian states at the vacuum $\ket 0$. Now, it turns out that being orthogonal to $T_{\ket 0}\mcm$ translates to some simple conditions on the expansion of $\ket\chi$ in the computational basis (whence the usefulness of considering $U\ad\ket{\psi'}$ in the first place). The point is that  an infinitesimal Gaussian motion away from the vacuum creates a \textit{pair} of fermions (which follows from considering the action of (say) $\exp(\epsilon(a_i\ad a_j\ad+a_i a_j\ad+a_i\ad a_j+a_i a_j))$ on the vacuum, which is annihilated by any annihilation operator). Thus, if $\ket\psi$ had a nonzero two-particle amplitude, a Gaussian slightly in that direction from the vacuum would have an increased   overlap with $\ket\psi$, contradicting the optimality of the vacuum. More formally, we find (recalling that at this point we are assuming $\ket\psi$ to be of even parity):
\begin{restatable}{lem}{lemstationarity}\label{lem:stationarity}
With $F=F_{\mcg}(\psi)>0$, we have $\ket\psi=\sqrt F\ket0+\sum_{\substack{|K|\geq4}}\psi_K\ket K$, where $|K|$ is the Hamming weight of the computational basis string $K$.
\end{restatable}
We postpone the (short) calculation which rigorously establishes Lemma~\ref{lem:stationarity}  to
Appendix~\ref{app:fm}; the point is just that, as alluded to above, there are no two-particle excitations (and as $\ket\psi$ is even, there are by definition no one- or three-particle excitations either). The vanishing of the two-particle excitations turns out to be particularly convenient for reasoning about the   acceptance probability of the Bell test; indeed, one can exploit it to find:

\begin{restatable}{lem}{lemnormal}\label{lem:normal}
If an even-parity state has vacuum probability $p$ and zero two-particle amplitudes, then the rejection probability of the Bell test satisfies $r_n(\psi)\geq {p(1-p)}/2.$
Consequently, for every pure state of definite parity,
$r_n(\psi)\geq F_{\mcg}(\psi)(1-F_{\mcg}(\psi))/2$.
\end{restatable}

Lemma~\ref{lem:normal} provides a bound on the rejection probability exactly of the form of Lemma~\ref{lem:witness}; the way we prove it (in   Appendix~\ref{app:fm}) is to use the structure of $\ket\psi$ guaranteed by Lemma~\ref{lem:stationarity} to deduce the existence of the sort of ``witness vector'' required by Lemma~\ref{lem:witness}.
This concludes the first step.

\medskip\noindent
\textit{Step 2.}
Briefly stepping back, recall that the Bell test passes with certainty if the state is indeed Gaussian, and we are trying to lower bound the test rejection probability $r_n(\psi)$ for far-from-Gaussian states (that is, states $\ket\psi$ with $F_{\mcg}(\psi)\leq1-\varepsilon^2$). Now, if  $F_{\mcg}$ is additionally bounded away from zero, Lemma~\ref{lem:normal} will give us exactly such a lower bound. Unfortunately, for $F_{\mcg}\in o(1)$ (i.e., exactly the states which we would like to be rejecting with prejudice) the bound of Lemma~\ref{lem:normal} becomes vacuous. Our remaining task, then, is    to show that these states are indeed rejected as they should be; this is  the point where we will attempt to apply the path argument of Section~\ref{sec:gen}. \\ 

To that end, it is helpful to introduce, for $s\in\{+1,-1\}$, the projectors $ \Pi_{ab,s}=({\id+isc_ac_b})/2$ onto the occupied and unoccupied states corresponding to the mode associated to $c_ac_b$, and the overlap $a_{ab,s}=\|{\Pi_{ab,s}\ket \psi}\|^2$ of $\ket\psi$ with those states. For example, if $(a,b)=(2j-1,2j)$ for some integer $j$, then we have $ \Pi_{(2j-1,2j),s}=({\id-sZ_j})/2$, the usual projectors onto the  occupied / unoccupied states of the physical mode $j$. More generally, there is some matchgate $V$ such that $Vc_aV^\dagger=c_1$, and $Vc_bV^\dagger=c_2$, so that
\begin{equation*}
V\Pi_{ab,s}V^\dagger =\frac{\id-sZ_1}{2}=\ketbra{q_s}\otimes \id,
\end{equation*}
where $q_s=(1+s)/2.$
Now, we can always write $|\psi\rangle = V\ad \big( |0\rangle\otimes|\xi_0\rangle+|1\rangle\otimes|\xi_1\rangle\big)$ 
for some (unnormalised) $\ket{\xi_0},\ket{\xi_1}$; projecting then selects one of these two summands, so that 
\begin{equation}\label{eq:abs}
\frac{V\Pi_{ab,s}|\psi\rangle}{\sqrt{a_{ab,s}}}=|q_s\rangle\otimes|\phi_{ab,s}\rangle,
\end{equation}
where $\ket{\phi_{ab,s}}$ is some $(n-1)$-mode (probably non-Gaussian) state of definite parity (indeed, by assumption $\ket\psi$ is of definite parity, the matchgate $V$ and the projector $\Pi_{ab,s}$ preserve parity, and $\ket{q_s}$ is of definite parity, which collectively forces $\ket{\phi_{ab,s}}$ to also be of definite parity).
Now, the point of this diversion is that we can also ask about the acceptance probability $f_{n-1}(\phi_{ab,s})=\tr\big[P_{n-1}\ketbra{\phi_{ab,s}}{\phi_{ab,s}}\ts\big]$ of $\ket{\phi_{ab,s}}$ under the $(n-1)$-mode Bell test; we find:

\begin{restatable}{lem}{lembranching}\label{lem:branching}
For every $n\geq2$ and every pure $n$-mode state $|\psi\rangle$ of definite parity, there exist $a<b$ and $s\in\{+1,-1\}$, with $a_{ab,s}>0$, such that $f_{n-1}(\phi_{ab,s})\geq f_n(\psi).$
\end{restatable}
That is, some single-mode occupation measurement has an outcome whose conditional state passes the smaller test at least as often as the original state passes the original test.  \\

To see how this helps with the low-fidelity states that defy the bound of Lemma~\ref{lem:normal}, suppose we could continuously deform our original state into a Gaussian, while never lowering its acceptance probability below its original value (we are for clarity essentially repeating some arguments from Section~\ref{sec:gen} here). If the original Gaussian fidelity were below $1/2$, then by the intermediate value theorem somewhere along this deformation it would have to equal $1/2$. At that point, Lemma $\ref{lem:normal}$ would force a rejection probability of at least $1/8$. Since acceptance along the deformation (by the speculative assumption of this paragraph) never falls below its initial value, the original state would have to reject with probability at least $1/8$ as well. Thus, constructing such a path would give precisely the missing bound.
Lemma $\ref{lem:branching}$ now gives some intuition about how one might proceed! By choosing a suitable tuple $(a,b,s)$, we can transform our state into one with  a definite occupation number at that     mode while preserving or increasing acceptance; repeating this on the remaining modes eventually leaves a (Gaussian) occupation state. The difficulty is that these projections make discrete jumps: the Gaussian fidelity could jump from below $1/2$ to above it without ever taking the intermediate value (and note that the bound of Lemma~\ref{lem:normal} is also weak for very high fidelity states). We would therefore like to show that each conditioning step can instead be reached continuously, without allowing acceptance to fall below its starting value.\\

So, let us choose a pair of Majoranas for which Lemma~\ref{lem:branching} supplies an improving outcome, and make the Gaussian change of coordinates $V$ described above. We will interpolate between its two occupation branches by varying only their relative weights. Concretely, let us write, for $0\leq t\leq1$,
\begin{equation}\label{eq:interpolation}
 \ket{\psi_t}=V\ad\big(\sqrt t\ket0\ket x+\sqrt{1-t}\ket1\ket y\big),
\end{equation}
where $x$ and $y$ are normalised and of opposite parities, and the original state occurs at some $t=t_0$. Now, we will see in Appendix~\ref{app:fm} that $P_n$ commutes with the total occupation across the two copies of (any) mode; it follows that we have 
\begin{align}
f_n(\psi_t)&= \bra{\psi_t}\ts P_n \ket{\psi_t}\ts =At^2+Bt(1-t)+C(1-t)^2\label{eq:quadratic}
\end{align}
where $A=f_{n-1}(x),\  C=f_{n-1}(y)$, and $B$ is the expectation of $P_n$ on the (unnormalised) cross term. 
So, the acceptance probability is a quadratic in $t$,   which we know to satisfy $\max\{f_n(\psi_0),f_n(\psi_1)\}\geq f_n(\psi_{t_0})$. But, for  a segment of a quadratic having an endpoint at least as high as its starting value, it is easy to see that there must be  a route to (at least one of the) endpoints which never goes below that starting value.  We can then recurse the procedure, which immediately yields:
\begin{restatable}{lem}{lempath}\label{lem:path}
For every pure state $\ket\psi$ of definite parity there is a continuous path of unit vectors $\gamma:[0,1]\to\mch_n$ of the same parity such that $\gamma(0)=\ket\psi,\ \gamma(1)\in\mcg_n$, and $f_n(\gamma(t))\geq f_n(\psi)$ for all $t\in[0,1]$.
\end{restatable}
\noindent
This concludes the second step.

\medskip\noindent
\textit{Step 3.}
We can now put the two ingredients together (this is basically just Theorem~\ref{thm:1}, but we spell it out). If the original state has $F\geq1/2$, Lemma~\ref{lem:normal} already gives
$r_n\geq F(1-F)/2\geq (1-F)/4\geq (1-F)/8$. If instead $F<1/2$, take the path supplied by Lemma~\ref{lem:path}. Gaussian fidelity is continuous, since it is a maximum over a compact family. It starts below $1/2$ and ends at one, so some state on the path has Gaussian fidelity exactly $1/2$. By Lemma~\ref{lem:normal}, that state rejects with probability at least $1/8$. By Lemma~\ref{lem:path}, its rejection probability is no greater than the initial rejection, which is in particular more than $(1-F)/8$. Either way, we therefore have $r_n(\psi) \geq(1-F)/8\geq \varepsilon^2/8$, so that the probability of accepting all $R$ independent rounds is therefore at most $(1-\eps^2/8)^R\leq e^{-R\eps^2/8}\leq\delta$.\\

At this point, we have established the claim of the theorem under the assumption that our state is of definite parity. More generally, let us write an arbitrary pure state as $\ket\psi=\sqrt p\ket{\psi_+}+\sqrt q\ket{\psi_-}$, with $p+q=1$,  where $\psi_\pm$ are normalised states of definite even and odd  parity, and put $F_\pm=F_{\mcg}(\psi_\pm)$. Since Gaussian states have definite parity, we then have $F_{\mcg}(\psi)=\max\{pF_+,qF_-\}.$ Now, from the readily seen fact that  $P_n$ commutes with the single-copy  parity operators $\Pi\ot\id$ and $\id\ot\Pi$, and additionally vanishes on the mixed-parity sectors, it follows that most of the cross terms in $\bra{\psi}\ts P_n \ket{\psi}\ts$ are going to vanish, leading to $ f_n(\psi)=p^2f_n(\psi_+)+q^2f_n(\psi_-)$, and, using $p+q=1$, that  $r_n(\psi)=p^2r_n(\psi_+)+q^2r_n(\psi_-)+2pq.$
The preceding  definite-parity case analysis then gives $r_n(\psi_\pm) \geq(1-F_\pm)/8$, whence
\begin{align*}
 8r_n(\psi)\geq p^2(1-F_+)+q^2(1-F_-)+16pq=1+14pq-p^2F_+-q^2F_-\geq1-F_{\mcg}(\psi),
\end{align*}
where the final step used $p^2F_++q^2F_-\leq(p+q)\max\{pF_+,qF_-\}$. This immediately implies the statement of the theorem in the general case. \\

Finally, the fact that our $\varepsilon$ and $\delta$ dependence are optimal comes from comparing the vacuum state with $\sqrt{1-\eps^2}\ket{0}\tn+\eps\ket{1}\tn$ (take $\varepsilon<1/2$ and $n>2$); one can show~\cite{haug2026practical} that its distance to the Gaussian manifold is $\varepsilon$; meanwhile it cannot be distinguished from $\ket0 \tn$ with fewer than $\Om(\varepsilon^{-2}\log(1/\delta))$ copies~\cite{watrous2018thetheory}. 
\end{proof}

\section{Slater determinant testing}\label{sec:sd}
We now turn to our next example,   that of Slater determinant testing. As in the previous example, we will   follow the recipe of Section~\ref{sec:gen}.  We  consider  $\eta$-particle states of an $n$-mode fermionic system,  the relevant Hilbert space for which is $\mch_\eta=\Lm^{\eta}(\mbc^n)$. Given a basis $e_1,e_2,\ldots,e_n$ of $\mbc^n$, we recall that a  Slater determinant $\ket \phi$ is given by the exterior action $\Lm^\eta(U)$ of a $U\in {\rm U}(n)$ on one of the occupation-basis vectors $\ket Z  = e_{z_1} \wedge \cdots \wedge  e_{z_\eta} $, where $Z=\{z_1<\cdots<z_\eta\}\subset[n]$ (without loss of generality we can take $Z=\{1,2,\ldots,\eta\}$, absorbing everything into $u$). Our target family of Slater determinant states (let us call it $\mcs_{n,\eta}$), then, is parameterised by all such choices of $U\in {\rm U}(n)$, and we will often identify such a state with a corresponding $U$. The first step is to decide which operators to use in our two-element two-copy POVM $\{P,Q\}$. We will find it  simplifies the resulting computations somewhat   to reframe the canonical projector onto the span of the tensor square of the Slater determinants 
as the projector $P={\rm proj}_{\ker(T_{n,\eta}\ad T_{n,\eta})}$, where $T_{n,\eta}=\sum_{i=1}^n a_i\ot a_i\ad$ (we will also  sometimes use the notation $L_{n,\eta}=T_{n,\eta}\ad T_{n,\eta}$). Indeed, it is immediate that this test will accept   any Slater determinant state: up to a rotation by $\Lm^\eta(V)$ for some $V\in{\rm U}(n)$ (which commutes past $T_{n,\eta}$) a given mode is either occupied or unoccupied; either way, for each $i$, one of $a_i$ or $a_i\ad$ will annihilate $\ket\phi\ts$ if $\ket\phi$ is a  Slater determinant.  As in the case of fermionic Gaussianity testing, the remaining task is then to use the ideas of Section~\ref{sec:gen} to bound the rejection probability for states that are far from the accepted manifold. We find:

\begin{thm}\label{thm:slatmain}
Pure $n$-mode $\eta$-particle Slater determinant states can be distinguished from pure states satisfying $d_{\mcs_{n,\eta}}(\psi)\geq\eps$ with failure probability at most $\delta$  using the optimal number $\Theta(\varepsilon^{-2}\log(1/\delta))$ of copies, for all $0<\varepsilon,\delta<1/2$.
\end{thm}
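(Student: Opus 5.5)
We will follow the recipe of Section~\ref{sec:gen} and invoke Theorem~\ref{thm:1} with $\mcm=\mcs_{n,\eta}$, $P={\rm proj}_{\ker L_{n,\eta}}$ and $Q=\id-P$. Perfect completeness has already been established. The manifold $\mcs_{n,\eta}$ is a compact complex submanifold: it is the ${\rm U}(n)$-orbit of a highest-weight vector, namely the Plücker embedding of the Grassmannian. So two things remain: a witness bound as in Lemma~\ref{lem:witness}, and the path property of Definition~\ref{def:path}. Optimality of $\Theta(\eps^{-2}\log(1/\delta))$ then follows as in Theorem~\ref{thm:main}. We compare $\ket{1\cdots\eta}$ with a perturbation $\sqrt{1-\eps^2}\ket{Z_0}+\eps\ket{Z_1}$ in which $Z_1$ differs from $Z_0$ in at least two orbitals, so that it is $\eps$-far; standard two-hypothesis lower bounds then apply.

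\textit{Step 1 (witness).} By ${\rm U}(n)$-invariance of $L_{n,\eta}$ (since $\Lm^\eta(V)\ts$ commutes with $T_{n,\eta}$), we take the nearest Slater determinant to be $\ket g=\ket{Z_0}$ with $Z_0=\{1,\ldots,\eta\}$. The tangent space at $\ket{Z_0}$ is spanned by the single excitations $a_j\ad a_i\ket{Z_0}$ with $i\in Z_0$, $j\notin Z_0$. The analogue of Lemma~\ref{lem:stationarity} is then that $\ket\chi$ is supported on occupation strings $K$ with $|K\setminus Z_0|\geq 2$, i.e.\ at least double excitations. For each such $\ket\chi$ we construct $\ket w\in\operatorname{ran}Q$ as $\ket w=T_{n,\eta}\ad\ket v$ for a suitable $\ket v$; this lies in $\operatorname{ran}T\ad=(\ker T)^\perp=(\ker L)^\perp$.

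The natural choice is to pick $\ket v$ so that $\langle w, g\ot\chi+\chi\ot g\rangle=\langle v, T(g\ot\chi+\chi\ot g)\rangle$ is large. The term $T(\ket{Z_0}\ket\chi)$ moves one particle between the copies. This means $\langle w,\chi\ts\rangle=\langle v,T\chi\ts\rangle$ vanishes automatically if $\ket v$ is supported on configurations reachable only from $Z_0\ot K$-type pairs. We separate these from the $K\ot K'$ sectors by tracking the excitation count relative to $Z_0$ across both copies, which $T$ changes in a controlled way. We then compute $a^2/B$ explicitly, aiming for the advertised $c=2/3$. The bookkeeping uses the Plücker-type identities, since $T$ acting on $\ket{Z_0}\ket K$ with $|K\setminus Z_0|\geq 2$ produces determinant-like alternating sums.

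\textit{Step 2 (path property).} Mirroring Lemmas~\ref{lem:branching} and~\ref{lem:path}, we decompose $\mch_\eta$ according to the occupation of a mode $i$, in a basis rotated by some $V\in{\rm U}(n)$. This gives $\mch_\eta\cong(\ket{1}\ot\Lm^{\eta-1}(\mbc^{n-1}))\oplus(\ket0\ot\Lm^{\eta}(\mbc^{n-1}))$. $L_{n,\eta}$ commutes with the total two-copy occupation of mode $i$, so $f$ along $\sqrt t\ket x+\sqrt{1-t}\ket y$ is quadratic in $t$. The endpoint values are acceptance probabilities of the reduced $(n-1)$-mode tests with $\eta-1$ or $\eta$ particles; for this we check that $L_{n,\eta}$ restricted to each branch equals $L_{n-1,\cdot}$ plus a term vanishing there.

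The key claim, analogous to Lemma~\ref{lem:branching}, is that some mode and branch satisfy $f_{\rm reduced}\geq f(\psi)$. We prove it by averaging: summing the branch acceptances over an orthonormal mode basis should dominate $f(\psi)$, using the identity $\sum_i(a_i\ot a_i\ad)$ together with the decomposition of $\ker L$. The quadratic-segment argument then gives a monotone path to a branch endpoint. Recursing terminates at $\eta=0$ or $\eta=n-1$ type cases, where every state is a Slater determinant.

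\textit{Main obstacle.} I expect the branching inequality to be the hardest step, since it requires the right averaging identity relating $P_{n,\eta}$ to the reduced projectors. The witness construction is explicit but fiddly.
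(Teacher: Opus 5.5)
Your outline follows the paper's architecture: Theorem~\ref{thm:1}, a witness bound, and the path property obtained by splitting on the occupancy of one orbital and interpolating quadratically. The reductions you state in Step 2 are correct. The gap is the branching inequality. You leave it as something that ``should'' hold, and the averaging you propose for it cannot work as described. A sum over an orthonormal mode basis is not ${\rm U}(n)$-invariant and is not constant on $\operatorname{ran}P_{n,\eta}=\ker L_{n,\eta}$. For a Slater determinant $\phi$, $\bra{\phi}\ts\sum_iN_i\otimes N_i\ket{\phi}\ts=\sum_i\langle\phi|N_i|\phi\rangle^2$ equals $\eta$ when $\phi$ is aligned with the basis, but $\eta^2/n$ when its orbitals are Fourier modes; adding the empty sectors $(\id-N_i)\otimes(\id-N_i)$ does not fix this. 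So no normalised sum $S$ of branch projectors satisfies $P_{n,\eta}S=P_{n,\eta}$ with $0\le S\le\id$. That identity is exactly what is needed to write $f_{n,\eta}(\psi)$ as a sub-normalised average of branch acceptances.

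The missing idea is to average over Haar-random orbitals $u$. An orthonormal basis is only a $1$-design, and a $2$-design is required. With $\Omega_{n,\eta}=\sum_{i,j}(a_i\ad a_j)\otimes(a_j\ad a_i)$, the fourth-moment formula gives $\mathbb E_u[N_u\otimes N_u]=(\eta^2\id+\Omega_{n,\eta})/(n(n+1))$. Since $L_{n,\eta}=\eta\id-\Omega_{n,\eta}$, this yields $\alpha_{n,\eta}\mathbb E_u[N_u\otimes N_u]=\id-L_{n,\eta}/(\eta(\eta+1))$ with $\alpha_{n,\eta}=n(n+1)/(\eta(\eta+1))$. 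Now $N_u\otimes N_u$ commutes with $P_{n,\eta}$, and $f_{n,\eta}(J_1x)=f_{n-1,\eta-1}(x)$. Hence $f_{n,\eta}(\psi)=\alpha_{n,\eta}\mathbb E_u[p_u^2f_{n-1,\eta-1}(\phi_u)]$, where $p_u=\|a(u)\psi\|^2$ and $\phi_u=a(u)\ket{\psi}/\sqrt{p_u}$, with $\alpha_{n,\eta}\mathbb E_up_u^2\le1$. So some orbital's occupied branch is enhancing; only the occupied branch is needed. In Lemma~\ref{lem:branching} the same role is played by summing over all $\binom{2n}{2}$ Majorana pairs rather than the $n$ physical modes, which is what turns $\sum\Pi_{ab,s}\otimes\Pi_{ab,s}$ into a function of $\Lambda_n^2$.

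Step 1 has a similar hole. Your orthogonality mechanism is right. Take $\ket{w}=T_{n,\eta}\ad\ket{v}$ with $\ket{v}$ supported on the images $T_{n,\eta}\ket{Z_0}\ket{K}$ and $T_{n,\eta}\ket{K}\ket{Z_0}$. In these configurations the first copy has no excitations, respectively the second copy has exactly one, and $T_{n,\eta}\chi\ts$ cannot reach either. However, a mode-independent constant depends entirely on the choice of $\ket{v}$, which you do not make. The obvious choice $\ket{v}=T_{n,\eta}(g\otimes\chi+\chi\otimes g)$ fails. It gives $a=2\sum_K\ell_K|\chi_K|^2$ but $\|w\|^2\ge4\sum_{\ell_K>2}\ell_K^2|\chi_K|^2$. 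Placing weight $\ell^{-3/2}$ on one $\ell$-fold excitation and the rest on a double excitation then gives $a^2/B=\mco(\ell^{-1/2})$, reintroducing dependence on $n$ and $\eta$. The paper reweights by $1/\ell_K$, taking $\ket{w}=L_{n,\eta}\sum_{\ell_K\ge2}(\chi_K/\ell_K)\ket{e_K}$ with $\ket{e_K}=\ket{Z_0}\ket{K}+\ket{K}\ket{Z_0}$. The vectors $L_{n,\eta}\ket{e_K}$ are mutually orthogonal, with $\|L_{n,\eta}e_K\|^2\le6\ell_K^2$ and $\inner{e_K}{L_{n,\eta}e_{K'}}=2\ell_K\delta_{KK'}$. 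This gives $a=2$, $B\le6$ and $c=2/3$. Without these two computations the hypotheses of Theorem~\ref{thm:1} are not established.
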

\begin{proof}
Let us dutifully follow our recipe. The first step is to find a witness vector (cf. Lemma~\ref{lem:witness}) so as to establish a result analogous to Lemma~\ref{lem:normal}. Again, the key is to bring our state to the form of Eq.~\eqref{eq:stationarity} for convenient choice of $\ket g$, and then build a witness   vector which is informed by the structure of the component   orthogonal to the tangent plane to the manifold of target states.    In the Slater case we find it convenient to take $\ket g$ to be given by $e_1\wedge e_2\wedge \ldots \wedge e_\eta$, some analysis of which leads (in a very similar way to the fermionic Gaussianity example) to:

\begin{restatable}{lem}{lemslatnormal}\label{lem:slatnormal}
For every pure state $\ket\psi$, the rejection probability of our Slater determinant test satisfies $r_{n,\eta}(\psi)\geq (2/3)F_{\mcs_{n,\eta}}(\psi)(1-F_{\mcs_{n,\eta}}(\psi))$.
\end{restatable}

The next step is to seek a way to connect an arbitrary state to the manifold of Slater determinant states along a path satisfying the criteria of Definition~\ref{def:path}. In the fermionic Gaussian case we approached this by deforming, one mode at a time, into a state for which   the mode currently under   consideration was either exactly occupied or unoccupied; we will do essentially the same thing here. Indeed, take some single particle orbital $u\in\mbc^n$. The $\eta$-particle Hilbert space decomposes as
\begin{equation}\label{eq:sd_split}
 \mch_{n,\eta}=\Big(\ket u\wedge\bigwedge^{\eta-1}\ket {u}^\perp\Big)\oplus\bigwedge^\eta \ket u^\perp,
\end{equation}
into subspaces where the $u$ orbital is occupied or not; indeed, on the left hand side of the right hand side of Eq.~\eqref{eq:sd_split}  one of the occupied states is exactly $u$ and the other $\eta-1$ occupied states are orthogonal to it, and on the right all of the occupied states orthogonal to it. Now, recall from our analysis of this step in the general case (Section~\ref{sec:gen}) that we would like our accepting POVM element to commute with (the tensor square of) an operator which is diagonal with respect to the decomposition of Eq.~\eqref{eq:sd_split}, and is proportional to the identity on both blocks (but with different eigenvalues in the two cases).  In this case, that can be arranged by considering the occupation number of the $u$-mode; that is, with $a(u)\ad:=\sum_i u_i a_i\ad$, the operator $N_u = a(u)\ad a (u) $. Then, note that $N_u\ot\id +\id\ot N_u$ (which counts the total occupancy across the two copies of the mode $u$, and generates the tensor square of $\exp i \theta N_u$) commutes with our POVM elements. Indeed, $T_{n,\eta}$ preserves the total occupancy across the two copies of the mode $u$, because it simply swaps occupancies between the two copies, and so commutes with $N_u\ot\id +\id\ot N_u$; it follows that $P_{n,\eta}$ also commutes with it, as does $L_{n,\eta}=T_{n,\eta}\ad T_{n,\eta}$. \\

Now, the two-copy Hilbert space decomposes into sectors of $u$-mode occupancy zero, one, and two in a manner that respects the acceptance probability of our test. Concretely, 
let us use   $J_1:x\mapsto u\wedge x$ and $J_0:y\mapsto y$ to denote the two natural embeddings of (respectively) $\mch_{n-1,\eta-1}$ and $\mch_{n-1,\eta}$ into $\mch_{n,\eta}$ that come from the decomposition on the right hand side of Eq.~\eqref{eq:sd_split} (so that, for example, $J_1J_1^\dagger=N_u $ and $  J_0J_0^\dagger=\mathbf1-N_u$). 
The nice result is then that $f_{n,\eta}(J_1(x))=f_{n-1,\eta-1}(x)$, and $f_{n,\eta}(J_0(y))=f_{n-1,\eta}(y)$, for all $\ket x$ and $\ket y$ in their respective spaces (this follows from the calculations employed in the Appendix to prove Lemma~\ref{lem:slatnormal}, in particular Eq.~\eqref{eq:t2d}). That is, not only does the projector of our test block diagonalise with respect to the decomposition into $u$-occupancy sectors, but for a vector entirely in the two or zero $u$-occupancy sectors, we can equivalently factor off its $u$-dependence and calculate its acceptance probability using the test which works at ``the next level down'' (which, depending on     which occupancy sector we were in, will be either $\mch_{n-1,\eta-1}$ or $\mch_{n-1,\eta}$). This is good, but we are not quite done yet. By Eq.~\eqref{eq:sd_split} we can (for any fixed $u$) always write  a generic state in $\mch_{n,\eta}$ as $ \ket{\psi_{t_0}}=\sqrt {t_0}\,J_1(x)+\sqrt{1-t_0}\,J_0(y)$ for some $x$ and $y$ and then make our usual argument by promoting $t_0$ to a continuous parameter in $[0,1]$; the problem is that we still need to establish that there exists a $u$ for which one of the endpoints has at least as high a probability of acceptance as $\ket{\psi_{t_0}}$ itself does. This requires a little calculation, which we hide behind   the following lemma:
\begin{restatable}{lem}{lemslatenhance}\label{lem:slatenhance}
For every pure state $\ket\psi\in\mch_{n,\eta}$, there exists a  state $\ket{\phi_u}= a(u)\ket\psi /\norm{a(u)\ket\psi} $ represented by $u\in \mbc^n$ such that $f_{n-1,\eta-1}(\phi_u)\geq f_{n,\eta}(\psi)$.
\end{restatable}
Lemma~\ref{lem:slatenhance} is the last piece of information we need to be able to apply Theorem~\ref{thm:1}, which we can now do with $c=2/3$; this immediately establishes the present theorem. Finally, to establish optimality (at least in the nontrivial regime $\eta\neq0,1,n-1,n$) we can for example consider the state $\sqrt{1-\varepsilon^2}\ket{12}+\varepsilon\ket{34}$, which by standard arguments cannot be distinguished from $\ket{12}$ (which for $\varepsilon<1/2$ is its closest Slater state~\cite{zhang2016optimal}) with fewer than $\Om(\varepsilon^{-2}\log(1/\delta))$ copies. 
\end{proof}

\section{Discussion}\label{sec:discussion}
In this work, we have considered a general  route for studying property testing with respect to pure state families. To summarise, the high-level idea is that we write our state as a linear combination of the closest state within the family, and a state  departing orthogonally from it. By searching for  coordinates in which these two components have nice properties, we   hope   to find a ``witness vector'' which allows us to derive a bound on the rejection probability for states which are  not in the family, but have a non-negligible overlap with at least one state within it. Subsequently, we seek to use our path argument to upgrade this to a strong bound which applies non-vacuously to any state. \\

Our primary application is to establish that fermionic Gaussianity testing can be performed with a dimension-independent sample-complexity, which is readily seen to be asymptotically optimal. There are, however, a few natural questions remaining. First,  Theorem~\ref{thm:main} settles the copy complexity only up to constants, leaving open the possibility that more involved   collective tests might improve the precise number of copies required. Indeed, determining whether Bell-like tests achieve   optimal constants has been the subject of several recent works in quantum information theory~\cite{chang2026sample,beckey2026optimal}. 
Finally, although our second application, to   Slater determinant testing, has not received as much prior interest in the community, we find the fact that our general technique goes through very smoothly to a second example to be encouraging; certainly the extent to which it can be more broadly applied is an interesting open question.\\

\textit{Note added.} Constant sample-complexity gaussianity testing was also established (for bosons as well as fermions) in the concurrent works of Refs.~\cite{rad2026testing,iyer2026optimal}. The proofs techniques of these works are similar but not identical to ours; they consider the path from the unknown state to the target manifold along the flow of the gradient of the test rejection probability, and are able to prove the result by considering some of the properties of this path. We  encourage the interested reader to check out these nice works.

\section*{Acknowledgements}
MW and ML acknowledge support by the Laboratory Directed Research and Development (LDRD) program of Los Alamos National Laboratory (LANL) under project number 20260043DR, and by LANL’s ASC Beyond Moore’s
Law project. This work was also supported by the Quantum Science Center, a National Quantum Information Science Research Center of the U.S. Department of
Energy. 
The technical details leading to the proofs of our results were largely developed by GPT-6, with minimal human input. All AI-generated results were verified by the authors, and variously rewritten in order to meet our standards of clarity.

\begingroup
\raggedbottom
\bibliographystyle{IEEEtran}
\bibliography{quantum}
\clearpage
\endgroup
\appendix

\section{Further minutiae}\label{app:fm}
\noindent
In this appendix we give the proofs of the lemmas that were not proved in the main text.

\lemstationarity*
\begin{proof}
First of all,   the compactness of the Gaussian family  implies that a maximiser actually exists, and as the computational basis states span the space and are all Gaussian, its fidelity satisfies  $F>0$. As discussed in the main text, we can make a  ``Gaussian change of coordinates'' (i.e. multiplied by a suitable matchgate) so that this nearest Gaussian state is  the vacuum. This is helpful because we have a nice  chart for the (even) Gaussian family near the vacuum, namely
\begin{equation}\label{eq:thouless}
 \ket{g(z)}=\mathcal N(z)
 \exp\!\left(\sum_{i<j}z_{ij}a_i\ad a_j\ad\right)\ket0
 =\ket0+\sum_{i<j}z_{ij}\ket{ij}+\mco(\|z^2\|),
\end{equation}
where $\mathcal N(z)=1+\mco(\|z^2\|)$ and the $z_{ij}$ are
independent complex parameters. As in Section~\ref{sec:gen}, differentiating $\lvert\braket{g(z)}{\psi}\rvert^2$ with respect to both the real and imaginary parts
of every $z_{ij}$  forces the two particle amplitudes  $\ \braket{i,j|\psi}$ to vanish, which implies  the claim of the lemma.
\end{proof}

\lemnormal*
\begin{proof}
The cases $p=0,1$ are immediate, so let us assume $0<p<1$ and choose the vacuum amplitude to be $\sqrt p$. Our first observation is that $\Lm_n$  preserves the total occupation of each mode across the two copies, i.e., $[\Lm_n, (a_i\ad a_i)\ot \id+\id\ot(a_i\ad a_i)]=0$. Indeed, one can show  by a direct calculation that
\begin{equation*}
    \Lambda_n=2\sum_{i=1}^n \bigl(a_i\otimes a_i^\dagger+a_i^\dagger\otimes a_i\bigr),
\end{equation*}
and so can move particles between the two copies corresponding to a given mode, but cannot change the total number of particles at that mode.
For an even-sized set $K$ of size $k\geq4$, we  consider the (therefore $\Lm_n$-invariant) sector $\mch_K$ with total occupation one across the two copies   at each mode in $K$ and zero elsewhere; an orthonormal basis of this sector is
\begin{equation}\label{eq:sector}
 \{\ket S\otimes\ket{K\setminus S}:S\subseteq K\}
\end{equation}
(where $\ket S= \ket{s_1\dots s_n}$ is shorthand for the computational basis vector that satisfies $s_i=1$ iff $i\in S$). 
Now, let 
\begin{equation*}
  v_K= \sum_{S\subseteq K} \psi_S\psi_{K\setminus S} \,|S\rangle\otimes|K\setminus S\rangle  
\end{equation*}
be the projection of $\ket\psi^{\otimes2}$ into this sector. So, for example, $\psi_\varnothing=\sqrt p$, and the coefficient of every basis vector with $|S|=2$ or $k-2$ vanishes, because one of its factors is a forbidden two-particle amplitude. Now, through another direct calculation, one can see that
\begin{align}\label{eq:sectorbridge}
    \Lambda_K:= \Lambda_n \big\rvert_{\mch_K} &= 2 \sum_{j=1}^k (-1)^{j-1} X_{k_j} \otimes X_{k_j}\\
    \Lambda_K^2&= 4k\id_{\mch_K} + 8\sum_{i< j} (-1)^{i+j} (X_{k_i}X_{k_j}) \otimes (X_{k_i}X_{k_j}),
\end{align}
where $k=|K|$. We are ready to define our witness vector! Indeed, let $e_K=\ket0\otimes\ket K+\ket K\otimes\ket0,$ and $w_K=\Lambda_K^2e_K$ (note that since $\Lambda_K$ is Hermitian, $w_K$ is perpendicular to its kernel and hence belongs to the rejecting subspace of our Bell test). Now, because $w_K$ is a sum over vectors of the form $\ket S\ket T$ where $S$ and $T$ have support of magnitude in   $\{0,2,k-2,k\}$, and because (by the assumption of the lemma) $\psi_S=0$ for $|S|=2$, we have 
\begin{equation}\label{eq:witnessoverlap}
 \langle w_K,v_K\rangle=8k \sqrt p\,\psi_K.
\end{equation}
We next calculate the norm of our witness. 
Since $\Lambda_K$ is Hermitian, $\|w_K\|^2=\langle e_K,\Lambda_K^4e_K\rangle$. Writing $u=\ket0\ket K$ and $z=\ket K\ket0$, this expands into four terms:
\begin{equation*}
    \|w_K\|^2 = \langle u,\Lambda_K^4u\rangle+\langle z,\Lambda_K^4z\rangle+2\operatorname{Re}\langle u,\Lambda_K^4z\rangle.
\end{equation*}
The two diagonal terms are equal. To calculate either one of them, consider
\begin{equation*}
\Lambda_K^4 = 16 \sum_{i_1,i_2,i_3,i_4=1}^k \pm 
(X_{k_{i_1}}X_{k_{i_2}}X_{k_{i_3}}X_{k_{i_4}})\ts.
\end{equation*}
Now, a product of bit flips returns a computational-basis vector to itself exactly when every bit is flipped an even number of times. With four flips in total, there are only two possibilities: one index occurs four times ($k$ such cases), and two distinct indices each occur twice ($\binom{k}{2} \binom{4}{2}$ such cases).  It follows that $\langle u,\Lambda_K^4u\rangle = 16\left(k+6\binom{k}{2}\right) =16(3k^2-2k).$ For the off-diagonal term, we must transform $|0\rangle$ into $|K\rangle$, requiring that   every one of the $k$ bits has to be flipped (an odd number of times). If $k>4$, four flips cannot do this, so that $\langle u,\Lambda_K^4z\rangle=0$. When $k=4$, however, an off-diagonal contribution is possible, leading to $\langle u,\Lambda_K^4z\rangle=16 (4!)=384$ when $k=4$; in total we get  $\|w_K\|^2=2\cdot640+2\cdot384=2048$ when $k=4$. \\ 

\noindent
Now, since $Q_n$ is an orthogonal projector and $Q_nw_K=w_K$, we have
\begin{equation*}
    \lvert\langle w_K,v_K\rangle \rvert^2 = \lvert \langle w_K,Q_nv_K\rangle \rvert^2\leq  \|w_K\|^2\|Q_nv_K\|^2;
\end{equation*}
recalling Eq.~\eqref{eq:witnessoverlap} then yields
\begin{align*}
\|Q_nv_K\|^2 &\geq\frac{64k^2p|\psi_K|^2}{\|w_K\|^2}\\
&=\begin{cases}
\dfrac p2|\psi_K|^2,&k=4,\\[6pt]
\dfrac{2kp}{3k-2}|\psi_K|^2,&k>4.
\end{cases}\\
&\geq \frac p2|\psi_K|^2.
\end{align*}
This is a lower bound on the rejection probability of the projection of $\ket{\psi}\ts$ onto  $\mch_K$ sector. Now, if $K\neq K'$, then the sectors $\mathcal H_K$ and $\mathcal H_{K'}$ have different total-occupation eigenvalues for at least one mode, so they are orthogonal. Moreover, we showed earlier that $\Lm_n$, and hence $Q_n$, commutes with the projector onto each $\mch_K$, from which it follows that   the vectors $Q_nv_K$ are mutually orthogonal. With that in mind, and writing $\ket\psi\ts = \sum_Kv_K+v_{\mathrm{rest}}$, we have
\begin{equation*}
    r_n(\psi)=\bigl\|Q_n|\psi\rangle^{\otimes2}\bigr\|^2=\sum_K\|Q_nv_K\|^2+\|Q_nv_{\mathrm{rest}}\|^2\geq\frac p2\sum_K|\psi_K|^2=\frac{p(1-p)}2.
\end{equation*}
Finally, to apply all this to an arbitrary even pure state, we choose the  closest Gaussian (maybe it's not unique; choose one of them) and move it to the vacuum by a Gaussian unitary. The stationarity lemma then removes all two-particle amplitudes, and the vacuum probability becomes $p=F_{\mathcal G}(\psi)$. Gaussian invariance preserves the rejection probability, so the estimate becomes $r_n(\psi)\geq F_{\mathcal G}(\psi)(1-F_{\mathcal G}(\psi))/2$.

\end{proof}

\lembranching*
\begin{proof}
Let us fix an $a<b$ and an $s\in\{\pm1\}$. As in the main text, there is  a matchgate $V$ such that $Vc_aV^\dagger=c_1$ and $Vc_bV^\dagger=c_2$, so that $V\Pi_{ab,s}V^\dagger=\ketbra{q_s}\otimes \id$ for  $q_s=(1+s)/2$. As also discussed in the main text, a  pure state of definite total parity is therefore (up to the action of $V$) projected onto a tensor product   $\ket{q_s}\ket\phi$, where $\ket\phi$ is a pure $(n-1)$-mode state of definite parity.
Now, using the explicit forms of the Majoranas, we have (regardless of whether $\ket{q_s}$ is $\ket 0$ or $\ket 1$)
\begin{align*}
    \Lm_n (\ket{q_s}\ket\phi)\ts &= (X\ket{q_s})\ts \ket\phi\ts + (Y\ket{q_s})\ts \ket\phi\ts + (Z\ket{q_s})\ts \Lm_{n-1}\ket\phi\ts = \ket{q_s}\ts \Lm_{n-1}\ket\phi\ts,
\end{align*}
from which it follows that $f_n(\ket q_s\otimes\ket\phi)=f_{n-1}(\ket\phi)$, as the weight of $(\ket{q_s}\ket\phi)\ts$ killed by $\Lm_n$   is thus exactly the weight of $\ket\phi\ts$ killed by $\Lm_{n-1}$. Now,  as $P_n$ commutes with the matchgate group, it also commutes with the Lie algebra element $G_{ab}=ic_ac_b\otimes\id+\id\otimes ic_ac_b$. The eigenspace of $G_{ab}$ of
eigenvalue $2s$ has projector $\Pi_{ab,s}\otimes\Pi_{ab,s}$, which therefore also commutes with $P_n$. From Eq.~\eqref{eq:abs} it then follows that
\begin{align}
 \bra\psi^{\otimes2}P_n(\Pi_{ab,s}\otimes\Pi_{ab,s})\ket\psi^{\otimes2}&=\bra\psi^{\otimes2}(\Pi_{ab,s}\otimes\Pi_{ab,s})P_n(\Pi_{ab,s}\otimes\Pi_{ab,s})\ket\psi^{\otimes2}\nonumber \\ &=(\bra{q_s}\bra{\phi}\sqrt{a_{ab,s}})^{\otimes2}P_n(\sqrt{a_{ab,s}}|q_s\rangle\otimes|\phi_{ab,s}\rangle)^{\otimes2}\nonumber \\
 &=a_{ab,s}^{\,2}f_{n}(\ket{q_s}\ket{\phi_{ab,s}})\nonumber\\
 &=a_{ab,s}^{\,2}f_{n-1}(\ket{\phi_{ab,s}}).\label{eq:branchcompression}
\end{align}
Meanwhile, direct calculations reveal   that
\begin{equation}\label{eq:bridgesquare}
 \Lambda_n^2=2n\id+2\sum_{a<b}c_ac_b\otimes c_ac_b
\end{equation}
and
\begin{align*}
   \sum_{a<b,s}\Pi_{ab,s}\otimes\Pi_{ab,s} &=  \sum_{a<b}\big[\big(\frac{\id-ic_ac_b}{2}\big)\ts + \big(\frac{\id+ic_ac_b}{2}\big)\ts \big]\\
   &=  \frac 12\sum_{a<b}\big[\id + (ic_ac_b)\ts \big]\\
   &=  \frac 12 \binom{2n}{2} \id - \frac 12\sum_{a<b} (c_ac_b)\ts \\
   &=   \frac{n(2n-1)}{2}  \id - \big(\frac 14 \Lm_n^2 - \frac 12 n\id\big)  \\
   &=  n^2  \id - \frac 14 \Lm_n^2  ,
\end{align*}
where we have used Eq.~\eqref{eq:bridgesquare}. Introducing the normalised
\begin{equation}\label{eq:Sidentity}
 S_n=\frac1{n^2}\sum_{a<b,s}\Pi_{ab,s}\otimes\Pi_{ab,s}
 =\id-\frac{\Lambda_n^2}{4n^2},
\end{equation} 
we then have $0\leq S_n\leq\id$ and $P_nS_n=P_n$ (as by definition $\Lm_nP_n = 0$), whence
\begin{align*}
    f_n(\psi) &= \bra{\psi}\ts P_n \ket{\psi}\ts \\
     &= \bra{\psi}\ts P_nS_n \ket{\psi}\ts \\
     &= \bra{\psi}\ts P_n \Big(\frac1{n^2}\sum_{a<b,s}\Pi_{ab,s}\otimes\Pi_{ab,s}\Big) \ket{\psi}\ts \\
     &= \frac{1}{n^2}\sum_{a<b,s} a_{ab,s}^{\,2}f_{n-1}({\phi_{ab,s}})\\
     &\leq \Big(\max_{\substack{a<b,s\\a_{ab,s}>0}}f_{n-1}(\phi_{ab,s})\Big) \frac{1}{n^2} \sum_{a<b,s}a_{ab,s}^{\,2}  \\
     &\leq \max_{\substack{a<b,s\\a_{ab,s}>0}}f_{n-1}(\phi_{ab,s})  ,
\end{align*}
where we have used Eq.~\eqref{eq:branchcompression}  and that $\sum_{a<b,s}a_{ab,s}^{\,2} \leq n^2$, which follows from $S_n\leq \id$.  
\end{proof}

\lemslatnormal*
\begin{proof}
By the compactness of the  Slater family, we can find some closest   determinant state; by rotating (as in the fermionic Gaussian case) by $\Lm^\eta(v)$ for some $v$ we can without loss of generality take it to be  $\ket I = \ket{1,2,\ldots,\eta }$, with overlap $\inner I\psi=\sqrt F$. Now, by considering $\Lm^\eta(v)\ket{I}=(v\ket{1})\wedge \dots\wedge (v\ket{\eta})$, we see that the tangent space consists of single particle--hole substitutions:
\begin{equation}
 T_I\mcm_{n,\eta}
 =\Span_{\mathbb C}\{a_j\ad a_i\ket I:i\in I,\ j\notin I\},
\end{equation}
where we have introduced the shorthand notation $I=\{1,2,\ldots,\eta\}$. Indeed, expanding the action of $v$ on $\ket I$ in a Taylor series, we see that the first order terms are exactly given by replacing   one of the particles. As in the previous example, the optimality of $\ket I$ then forces such single-particle replacements to have zero amplitude in the part of $\ket\psi$ orthogonal to $\ket I$. That is, we can write $\ket\psi=\sqrt F\ket I+\sqrt{1-F}\ket\chi$ for $\ket\chi=\sum_{\substack{|K|=\eta,\ \ell_K\geq2}}\chi_K\ket K$, where $\ell_K=|I\setminus K|$. Now, recalling   $T_n=\sum_{i=1}^n a_i\ot a_i\ad$, and introducing   $e_K=\ket I\ket K+\ket{K}\ket{I}$, let us define the witness vector
\begin{equation}\label{eq:sd-witness}
\ket  w=T_{n,\eta}\ad T_{n,\eta}\sum_{\ell_K\geq2}\frac{\chi_K}{\ell_K}\ket{e_K}
\end{equation}
(note that this is very close in vibe-space to the witness vector of the fermionic Gaussian example). Now, because $T_n$ transfers only a single particle at a time from one copy to the other, it follows that $\inner w{\chi^{\otimes2}}=0$. Indeed, $\chi$ is a sum over computational basis states which differ in at least two places from $\ket I$, all of which are therefore orthogonal to $T_{n,\eta}\ad T_{n,\eta}\ket{I}\ket{K}$, and thence to $w$. This verifies one of the properties of Lemma~\ref{lem:witness} that we would like a witness vector to possess. Next, let us   calculate $\inner w{I\otimes\chi+\chi\otimes I}$. To begin, we can write
\begin{equation}
    T_{n,\eta}\ad T_{n,\eta} = \Big(\sum_{i=1}^n a_i\ad\ot a_i\big) \Big(\sum_{j=1}^n a_j\ot a_j\ad\big) = \sum_{i,j=1}^n (a_i\ad a_j)\ot (\delta_{ij} - a_j\ad a_i) = \eta\id\ts - \sum_{i,j=1}^n (a_i\ad a_j)\ot ( a_j\ad a_i).\label{eq:t2d}
\end{equation}
Writing $\ell=|I\setminus K|$ this yields, for example,
\begin{align}
    T_{n,\eta}\ad T_{n,\eta} \ket{I,K} &= \eta  \ket{I,K} - \Big(\sum_{i,j=1}^n (a_i\ad a_j)\ot ( a_j\ad a_i)\Big)\ket{I,K}\nonumber\\
     &= (\eta - |I\cap K|) \ket{I,K} - \Big(\sum_{i\neq j} (a_i\ad a_j)\ot ( a_j\ad a_i)\Big)\ket{I,K}\nonumber\\
     &= \ell \ket{I,K} - \sum_{i\neq j} \pm \ket{(I\cup\{i\})\setminus \{j\},(K\cup\{j\})\setminus \{i\}},\label{eq:tdecomp}
\end{align}
where the various $\pm$ signs are not something that we will need to compute. Now, there are exactly $\ell^2$ non-zero terms in the final sum, so that the magnitude squared of the above vector is $2\ell^2$, and
\begin{align*}
   \| T_{n,\eta}\ad T_{n,\eta} \ket{e_K}\|^2 &= \| T_{n,\eta}\ad T_{n,\eta} \ket{IK}\|^2  + \| T_{n,\eta}\ad T_{n,\eta} \ket{KI}\|^2 + 2 \mathfrak{Re}\braket{ T_{n,\eta}\ad T_{n,\eta}  IK }{ T_{n,\eta}\ad T_{n,\eta}  KI}\\
   &= 4\ell^2 + 2\, \mathfrak{Re}\braket{ T_{n,\eta}\ad T_{n,\eta}  IK }{ T_{n,\eta}\ad T_{n,\eta}  KI}.
\end{align*}
Now, it is easy to see that if $\ell>2$ the final overlap is zero, as one cannot move  enough particles to map $\ket{IK}$ and $\ket{KI}$ onto one another; if $\ell=2$, however, one has four ways of doing so. This yields
\begin{equation}
 \norm{T_{n,\eta}\ad T_{n,\eta} \ket{e_K}}^2 \leq
 \begin{cases}
 4\ell^2+8=6\ell^2,&\ell=2,\\
 4\ell^2,&\ell>2,
 \end{cases}
 \leq6\ell^2.
\end{equation}
This immediately yields
\begin{equation*}
 \norm {w}^2 \leq \sum_K\frac{|\chi_K|^2}{\ell_K^2}\norm{T_{n,\eta}\ad T_{n,\eta}\ket{e_K}}^2\leq6,
\end{equation*}
and
\begin{align*}
 \inner w{I\otimes\chi+\chi\otimes I}&=\sum_{\substack{K,K'\\ \ell_K,\ell_{K'}\geq 2}}\frac{\overline{\chi_K}\chi_{K'}}{\ell_K}\big(\bra{IK}+\bra{KI}\big)T_{n,\eta}\ad T_{n,\eta}\big(\ket{IK'}+\ket{K'I}\big)\\
 &=\sum_{\substack{K\\ \ell_K\geq 2}}\frac{|\chi_K|^2}{\ell_K}(2\ell_K)\\
 &=2,
\end{align*}
where we have used Eq.~\eqref{eq:tdecomp}.
This establishes the various  hypotheses of Lemma~\ref{lem:witness} with
$a=2$ and $B=6$, so that we conclude 
\begin{equation}
 r_{n,\eta}(\psi) \geq \frac {2F_{\mcs_{n,\eta}}(\psi)(1-F_{\mcs_{n,\eta}}(\psi))}{3}.
\end{equation}

\end{proof}

\lemslatenhance*
\begin{proof}
The plan is to deduce the existence of such a $u$ somewhat indirectly, by considering an inequality involving the average case, which will imply that at least some of the  $f_{n-1,\eta-1}(\phi_u)$  need to be greater than $ f_{n,\eta}(\psi)$.  To that end, we begin by recalling the elementary identity $\mathbb E_u[u_i\overline{u_j}u_p\overline{u_q}] =(\delta_{ij}\delta_{pq}+\delta_{iq}\delta_{pj})/[n(n+1)]$, where the average is over Haar random $u$~\cite{mele2023introduction}. This immediately implies  
\begin{equation*}
    \mathbb E_u[N_u\otimes N_u]=\frac{\eta^2\id+\Omega_{n,\eta}}{n(n+1)},
\end{equation*}
where $\Omega_{n,\eta}=\sum_{i,j} (a_i\ad a_j)\ot(a_j\ad a_i)$. Combined with Eq.~\eqref{eq:t2d}, this yields for $\alpha_{n,\eta}=n(n+1)/(\eta(\eta+1))$ that
\begin{equation*}
 S_{n,\eta}:=\alpha_{n,\eta}\mathbb E_u[N_u\otimes N_u]
 =\id-\frac{T_{n,\eta}\ad T_{n,\eta}}{\eta(\eta+1)},
\end{equation*}
where it can readily be further seen that $0\leq S_{n,\eta}\leq\id$ and $P_{n,\eta}S_{n,\eta}=P_{n,\eta}$. Now, let $p_u=\norm{a(u)\psi}^2=\inner\psi{N_u\psi}$, and set 
$\ket{\phi}_u=a(u)\ket \psi/\sqrt{p_u}\in\bigwedge^{\eta-1}u^\perp$, so that $N_u\ket \psi=\sqrt{p_u}\,J_1\ket{\phi_u}$. Since $D_u=N_u\otimes N_u$ projects onto the charge-two $u$-mode occupancy sector, it commutes with $P_{n,\eta}$ (as discussed in the main text). This then gives
\begin{align}
 f_{n,\eta}(\psi) &=\bra{\psi}\ts P_{n,\eta} \ket{\psi}\ts\nonumber\\
 &=\bra{\psi}\ts P_{n,\eta}S_{n,\eta} \ket{\psi}\ts\nonumber\\
& =\alpha_{n,\eta}\mathbb E_u\bra{ \psi}\ts D_uP_{n,\eta} D_u\ket{ \psi}\ts\nonumber \\
& =\alpha_{n,\eta}\mathbb E_u[ p_u^2 \bra{J_1 \phi_u}\ts P_{n,\eta} \ket{J_1 \phi_u}\ts ]\nonumber\\
 &=\alpha_{n,\eta}\mathbb E_u\left[p_u^2 f_{n-1,\eta-1}(\phi_u)\right],\label{eq:ave}
\end{align}
where we have used that the projector $D_u$ satisfies $D_u^2=D_u$, and that $f_{n,\eta}(J_1(x))=f_{n-1,\eta-1}(x)$ (see the main text). But we also know that $\alpha_{n,\eta}\mathbb E_u p_u^2=\bra{\psi}\ts S_{n,\eta} \ket{\psi}\ts\leq1$ (as $S\leq \id$) from which it follows that there must be some  $u$ with
$p_u>0$ and $f_{n-1,\eta-1}(\phi_u)\geq f_{n,\eta}(\psi)$, lest we contradict Eq.~\eqref{eq:ave}.

\end{proof}

\end{document}